\documentclass[11pt]{article}

\usepackage[margin=1in]{geometry}
\usepackage{amsmath,amssymb,amsthm}
\usepackage{xcolor}
\usepackage{nicefrac}
\usepackage{ifthen}
\usepackage[affil-it]{authblk}
%\usepackage[nomarkers,figuresonly]{endfloat}
%\usepackage{harvard}
%\usepackage[style=authoryear-ibid,natbib=true]{biblatex}
%\usepackage[sorting=nyt,style=apa]{biblatex}
%\addbibresource{banking.bib}% Syntax for version >= 1.2
%\let\cite\textcite
%\ExecuteBibliographyOptions{
%	  url=false
%}
\usepackage{setspace}
\doublespacing

\newcommand{\mm}[1]{\ensuremath{\mathbf{#1}}}
\newcommand{\vv}[1]{\ensuremath{\vec{#1}}}
\newcommand{\R}{\ensuremath{\mathbb{R}}}
\newcommand{\half}{\ensuremath{\frac{1}{2}}}
\newcommand{\defined}{\ensuremath{\stackrel{\textrm{def}}{=}}}
\newcommand{\ie}{\emph{i.e.},~}
\newcommand{\eps}{\ensuremath{\epsilon}}
\newcommand{\inparen}[1]{\ensuremath{\left( #1 \right)}}

\usepackage{accents}
\newcommand\munderbar[1]{\underaccent{\bar}{#1}}

\usepackage{tikz}
\usetikzlibrary{calc}

\theoremstyle{definition}
\newtheorem{theorem}{Theorem}
\newtheorem{lemma}{Lemma}
\newtheorem{corollary}{Corollary}

\newtheorem{definition}{Definition}

\usepackage{hyperref}

\hypersetup{
	colorlinks=true,
	linkcolor=black,
	citecolor=black,
	urlcolor=black
}

\title{Sensitivity and Computational Complexity in Financial Networks}
\author[1]{Brett Hemenway%
	\thanks{\texttt{fbrett@cis.upenn.edu}; 215-746-1738; Corresponding author\\ Supported in part by a grant from the Jacobs-Levy Foundation. }}
\author[1]{Sanjeev Khanna%
	\thanks{ \texttt{sanjeev@cis.upenn.edu}\\ Supported in part by National Science Foundation grants CCF-1552909 and CCF-1617851, and a grant from the Jacobs-Levy foundation..}}

\affil{University of Pennsylvania\\Department of Computer Science\\Levine Hall 3330 Walnut Street, Philadelphia, PA 19104}

%\date{}

\begin{document}
	
	\maketitle

	\vspace{-2em}
		\begin{abstract}
		%\begin{spacing}{1.0}
		Determining the causes of instability and contagion in financial networks is necessary to inform policy and 
		avoid future financial collapse.  

		In the American Economic Review, Elliott, Golub and Jackson proposed a simple model for capturing the 
		dynamics of complex financial networks.  In Elliott, Golub and Jackson's model, 
		the institutions in the network are connected by linear dependencies (cross-holdings) and if any institution's value drops below 
		a critical threshold, its value suffers an additional failure cost.
		This work shows that even in this simple model there are fundamental barriers to 
		understanding the risks that are inherent in a network.

		First, if institutions are not required to maintain a minimum amount of self-holdings, any change in investments 
		by a single institution can have an arbitrarily magnified influence on the net worth of the institutions in the system.  
		This implies that if institutions have small self-holdings, then estimating the market value of an institution requires almost perfect information 
		about \emph{every cross-holding in the system}.

		Second, Even if a regulator has complete information about all cross-holdings in the system, it may be computationally intractable 
		to estimate the number of failures that could be caused by a small shock to the system.
		%\end{spacing}
	\end{abstract}

	\bigskip
	\bigskip
	\textbf{Keywords:} Financial Contagion; Computational Complexity; Network Analysis; Network Stability; Sensitivity

	\newpage

		\section{Introduction}

		The recent financial crisis and subsequent bailout have highlighted the need for a better understanding of the dynamics of 
		financial networks.  Indeed, the complexity of modern financial networks has been blamed for our collective failure to recognize the presence 
		of serious risks in these systems.
		In this work, we show that even in extremely simple financial networks understanding the risk present in the system is computationally intractable in general -- 
		even in the presence of perfect information about all participants in the system.  We hope that these insights will help regulators 
		and policymakers to better understand the dynamics of financial networks.

		Understanding an individual's risk in a financial network is a difficult task, because an institution's ability to fulfill 
		its outgoing financial obligations is not a local property, \ie it cannot be understood by examining a single individual in isolation.
		An institution's ability to make its outgoing payments may depend on whether its incoming payments are made by its debtors, which in turn may depend on whether the incoming obligations are made to those institutions, 
		etc.  Thus each institution, acting without a global view of the network, cannot effectively understand its risk.
		Nevertheless, one might hope that a regulator, with a global view of the network could better understand the opportunities and risks inherent in the system.
		
		Even with a global view, the situation remains relatively complex.
		One complicating factor is the existence of cycles in the financial network, for example institution $A$ may have obligations to institution $B$
		which has obligations to institution $C$, which in turn has obligations to institution $A$.  The existence of these cyclical interdependencies has 
		been put forward as one of the primary sources of complexity into financial networks.
		In this work, we will examine network dynamics in both cyclic (Theorem \ref{thm:cyclicsensitivity}) and acyclic networks (Theorem \ref{thm:hardness}).
		%The interconnectedness of the system makes even the most basic facts nonobvious.
		%For example, in some models the presence of cycles means that the market value of the institutions in the network may not even be 
		%uniquely defined, \cite{EGJ14} whereas in other models market values are uniquely defined even in networks with cyclical dependencies \cite{EN01,GHM12}.
		%See Section \ref{sec:multipleequilibria} for a more thorough discussion of the multiplicity of equilibria.

		One of the driving forces in the study of financial networks is their ability to magnify risk:
		if an institution, $A$, defaults on its obligations to institution $B$, this may cause institution $B$ to default on its obligations to institution $C$ etc.  
		The spread of risk through a financial network is known as \emph{financial contagion} and has been carefully modeled and studied \cite{AG00,EN01,GHM12,AOT13,GY14,EGJ14}.

		This work focuses on quantifying the ability of financial networks to amplify and conceal risk.

	\section{Our contributions}
		In this work, we study two questions related to the stability of financial networks.
		First, we look at how sensitive market valuations can be to small changes in network structure.
		Second, we examine the computational complexity of determining how far a given network is from a massive failure.
		Throughout this work we use the network model put forward by \cite{EGJ14}.
		In this model, financial institutions own shares of underlying assets, and the institutions are connected by cross-holdings, 
		which are modeled as linear dependencies.
		If an institution's market value drops below a certain critical threshold, its value suffers a further discontinuous shock, 
		modeling the effects of a loss of investor confidence, or the failure to pay everyday operating costs.
		See \cite{EGJ14} for an in-depth discussion of the interpretation and real-world validity of this model.
		The formal mathematical model is described in detail in Section \ref{sec:model}.

		Our first result (Theorem \ref{thm:cycliclowerbound}), shows that financial networks can be highly sensitive to small changes 
		in their link structure.  Concretely, we show that if a single institution changes a single cross-holding by $\eps$, the market 
		values of institutions in the system can change by as much as $\eps/2r$, where $r \le 1$ is the minimum amount of self-holdings of the institutions in the network.
		The minimum self-holdings, $r$, is a measure of \emph{integration} of the network, where $r=0$ corresponds to a fully integrated network, 
		and $r = 1$ corresponds to a network with no integration.
		This result shows that if each institution retains only, say, $5\%$ self-ownership, a change in a single holding by $\eps$ can result in a $20 \eps$ change in 
		an institution's market value.  This amplification is directly caused by cycles in the network, and in acyclic networks this type of magnification cannot occur (see Lemma \ref{cor:acyclicsensitivity}).
		Our bounds are essentially tight, and Theorems \ref{thm:cyclicsensitivity} and \ref{thm:cycliclowerbound}, show that the true sensitivity is in fact $\Theta\inparen{ \eps/r}$.
	
		This sensitivity magnification has many consequences.  First, it means that in order to estimate market values of institutions within the system, all cross-holdings need to be known to 
		an extremely high degree of precision.  Investors or regulators who wish to calculate market values can be extremely far off if even a single cross-holding in the network remains unknown to them.
		Second, because small changes in a single institution's investments can have large effects on market values throughout the system, this indicates a potential for extreme instability in the 
		system as a whole; the small portfolio changes in one institution can have a drastically magnified effect market values of other institutions, and so small changes by a myopic institution 
		could topple even seemingly stable institutions.  Third, this extreme sensitivity means that calculating market values in a privacy-preserving manner can be extremely difficult \cite{NPH14}.
		This is the flip-side of the first point, in order to calculate market values for institutions in the system, \emph{all} the interbank holdings need to be known with high precision, 
		which means that revealing market prices has the potential to reveal extremely detailed information about each institution's interbank holdings.

%		Because this magnification effect is directly tied to institution's self-holdings, regulations that put a lower bound on an institution's self-holdings 
%		will also limit this magnification effect.  If each institution is cannot sell more than $80 \%$ of its shares to other institutions in the network, 
%		then there cannot be more than a five-fold magnification, while if institutions are unrestricted in how they sell their shares the magnification can be unbounded.
%		It is important to note, however, that because this magnification is coming from investment cycles in the network, it is only necessary to 
%		limit the proportion of shares that can be sold to \emph{other institutions in the system}.  Shares that are sold to private investors (who themselves cannot be invested in) 
%		do not contribute to this magnification effect.

		Our second result addresses the question of how well a regulator can assess the stability of a network.
		Suppose a regulator or oversight agency is presented with a network in which every bank is solvent, and 
		suppose the regulator believes that the underlying assets cannot drop in value by more than some fixed amount $d$.
		What is the maximum number of failures that can be caused by a drop of this size in asset values?
		We emphasize that in this scenario, the regulator has complete information about the entire structure of the financial network, 
		and the only uncertainty is in which specific assets may decline in value.  
		We show calculating the number of institutions that can fail in a network
		is NP-Hard.  In particular, we show that is as hard as calculating the maximum balanced clique in a bipartite graph 
		(Theorem \ref{thm:hardness}).  The maximum balanced bipartite subgraph (BCBS) problem is NP-hard \cite{GJ79,J87}, 
		and there is evidence that it is even hard to approximate.
		It is known that if $3$-SAT is not in $DTIME\inparen{ 2^{n^{3/4 + \eps}} }$ for some $\eps$, 
		then there is no polynomial time algorithm for calculating the maximum balanced clique to within a factor of $2^{(\log n)^\delta}$
		for some $\delta > 0$.
		\cite{FK04} go further, conjecturing that there is no polynomial time 
		algorithm to approximate the maximum balanced bipartite clique to within a factor of $n^\delta$ 
		for some $\delta > 0$.
		This would imply that there is no polynomial time algorithm that can even \emph{estimate} the maximum number of failures caused by a drop in asset values of 
		a given magnitude to within a factor of $n^\delta$ in a network with $2n$ institutions.  In particular, this means that there are financial networks where stress testing 
		(which is inherently computationally feasible) cannot hope to even approximate the magnitude of collapse that could be caused by some bounded drop in asset values.

		Unlike our first result, which crucially relies on cycles in the network, this result holds even in acyclic networks: 
		even when there are no cycles it is computationally intractable to estimate the maximum number of failures that could be caused 
		by a bounded drop in asset values.  This complexity arises not from cycles, but from the nonlinear dynamics that occur when 
		a bank drops below its critical threshold value.  
%		It is worrisome to think that for large enough networks, that even a regulator 
%		with complete information cannot hope to distinguish a network that is fairly robust to shocks in asset prices from a network
%		where a specific set of shocks could topple a large fraction of institutions.

	\section{Previous Work}

	Many different models have been proposed to study financial networks, and specifically models of stability and contagion.

	Allen and Gale \cite{AG98} considered a model consisting of depositors and banks.  Depositors deposit their money in the banks, and the banks 
	must choose between making \emph{short-term} investments or \emph{long-term} investments.  This model has three time periods, $t=0,1,2$.
	At $t=0$, investments are made, at $t=1$ short-term investments pay off, and depositors choose whether to withdraw their money, and at time $t=2$ 
	long-term investments pay off.  The banks' investment strategies then depend on the probability that depositors withdraw their money at time $t=1$.
	In a follow-up work \cite{AG00} introduced a network component, whereby banks can exchange deposits 
	with each other in an effort to mitigate risk, and they showed simple contagion effects in this model.

	Acemoglu, Ozdaglar and Tahbaz-Salehi \cite{AOT13} build on Allen and Gale's 3 time-step model.
	At time $t=0$ banks can make a short-term investment, a long-term investment or loan money to other banks.
	Long term investments yield a fixed return at $t=2$.
	Long term investments that are liquidated at $t=1$ receive a return that is randomly 
	distributed between two values.
	Banks whose investments returned the lower of the two values were said to have received a \emph{shock}.  
	Acemoglu, Ozdaglar and Tahbaz-Salehi considered how two extremal types of networks serve to propagate these shocks.
	They considered the ring (where each bank only has debts to its two neighbors) and the complete network where each bank's debts are spread to all other banks, 
	as well as all convex combinations of these two.  They showed that if the magnitude of the shocks are small then the complete network is more stable and resilient than the ring, 
	and if the magnitudes of the shock are large enough then both the ring and complete networks are the \emph{least} stable networks, 
	thus the complete network exhibits a phase transition, moving from the most stable to the least stable network as the magnitude of the shocks increase.

	Eisenberg and Noe \cite{EN01}  developed a very simple and appealing network model where each bank 
	has cash reserves and fixed debts to other banks.  Eisenberg and Noe's work focused on showing that (under some basic restrictions on the network)
	there is always a unique clearing vector (indicating how much of its debts each bank pays to its creditors), and they gave a linear program and simple iterative algorithm 
	for calculating this clearing vector and hence the equilibrium valuation of each bank.

	Gai and Kapadia \cite{GK10} considered a modification of Eisenberg and Noe's network model, forcing all incoming edges to have the same weight, 
	but allowed banks to have additional illiquid assets.
	Gai and Kapadia then considered the question of how a single bank failure propagates through a network.
	For this analysis, they considered different models for generating the underlying graph topology, and plotted contagion effects for different network models 
	characterized by their degree distribution.  They found that a single large shock could have devastating 
	effects on the network, but that this was highly dependent on where in the network the shock hit.

	Gourieroux, Heam and Monfort \cite{GHM12} considered a model that allows interbank investment via shares (like \cite{EGJ14}) and lending or insurance (like \cite{EN01}).
	Unlike \cite{EGJ14}, they do not introduce discontinuous failure costs.  Gourieroux, H{\'{e}}am and Monfort extend Eisenberg and Noe's uniqueness results to show that (under mild constraints on the network) 
	this extended model has a unique equilibrium value for all institutions.  They then examined the effects of exogenous shocks on the network (\ie drops in asset values) 
	using synthetic data and data obtained from the French banking sector.

	Morris \cite{M00} studied contagion in local interaction systems.  In that model, each node represents a player, and each player engages in a local game with 
	each of its neighbors.  Morris focused on the case where each player has a binary strategy space, $\{0,1\}$, and there is a global $2 \times 2$ payoff matrix, such 
	that for each edge in the graph, the two neigboring players receive a payoff according to this global payoff matrix.  In that model, a strategy is said to be 
	``contagious,'' if it can spread from a finite set of players to an infinite set of players by the best-response dynamics of the underlying local game.

	The notion of failure cascades and contagion have also been studied in the computer science literature by \cite{BEKKT11}.
	Blume et al. considered general cascades in graphs where the edges were unweighted and a node was said to fail if some critical threshold of its neighbors failed.
	By choosing all edges to have equal weight, and choosing each institution's failure threshold carefully, the failure model of \cite{EGJ14} can be made to overlap 
	with this general network failure model.

	In this work, we use the model of \cite{EGJ14}.
	In this model, institutions can own shares in each other, or in ``primitive assets'' that have intrinsic value
	outside of the network.  The model is explained in detail in the next section.
	Elliott, Golub and Jackson introduced this model to help analyze and understand contagion effects in networks.
	Because a cascade of collapse requires an initial failure, Elliott, Golub and Jackson began 
	by showing that the weakest institution can never be made strictly more stable by any fair trade between the institutions.
	Next they examined the contagion dynamics in a wide class of networks parametrized 
	by ``integration'' and ``diversification.''  Integration increases as institutions in the network 
	increase their inter-network holdings, \ie integration increases as the percentage of each institution 
	owned by shareholders external to the network decreases.  As integration increases, the institutions 
	fates are more closely tied together.  Diversification measures how risk is spread within the network.
	Diversification increases as institutions increase their \emph{number} of cross-holdings.
	Neither integration nor diversification have strictly positive or negative effects on network stability, but 
	instead have slightly more complex non-monotonic effects.

	The notion of computational complexity has been studied in the context of financial products by Arora et al. who 
	showed that banks can create derivatives that are computational intractable to price accurately \cite{ABBG10,ABBG11}.
	The result of Arora et al. crucially relies on the information asymmetry between the institution the seller (who creates 
	the derivatives) and the buyer who only sees their resulting composition.  Braverman and Pasricha \cite{BP14} show 
	that even in the full information setting pricing compound options is PSPACE complete.

	\section{Model}
\label{sec:model}

We use the model put forward by \cite{EGJ14}.
In this model there are $n$ financial institutions, these can be viewed as countries, banks or private firms, 
and $m$ underlying assets, that can be viewed as any object or project with intrinsic value.
The financial institutions own shares of the underlying assets which impart value into the system.
The values of the institutions themselves are interconnected via a network (modeled as a weighted, directed graph).
The interdependencies (cross-holdings) between the institutions are modeled as simple linear dependencies.
These linear cross-holdings can model simple equity stakes (one institution owning shares in another) or they can be 
viewed as an approximation of more complicated debt contracts between the institutions 
(see \cite[Section 2.5]{EGJ14} for a more thorough discussion of the validity and generality of this model).

Although institutions invest in one another, all value in the system originates from the underlying assets.
The price of asset $k$ is denoted by $p_k$, and we use $D_{ik} \ge 0$ to denote the percentage of asset $k$ 
owned by institution $i$.  The $n \times m$ matrix of ownership is denoted by $\mm{D} = (D_{ik})$.

We define $\mm{C} = (C_{ij})$ to be the $n \times n$ matrix indicating the cross-holdings of institutions.
Thus institution $i$ owns a $C_{ij}$ fraction of institution $j$.  It will be be useful to view the network 
of cross-holdings as a directed graph with $n$ nodes representing the financial institutions, and 
an edge from institution $j$ to $i$ of weight $C_{ij}$ whenever $C_{ij} > 0$.  Following \cite{EGJ14}, 
we set $C_{ii} = 0$ for all $i$.  Now, $\sum_{i} C_{ij}$ is the fraction of institution $j$ that is owned 
by institutions external to $j$.  The remainder, the amount of self-ownership, is denoted by $\hat{C}_{jj} \defined 1 - \sum_{i} C_{ij}$.
The matrix $\hat{\mm{C}}$ will be a diagonal matrix with $\hat{C}_{ii}$ on the diagonal.

As noted by \cite{BBC89}, this type of model introduces two types of valuations, 
the \emph{equity valuation} ($\vv{V}$) and the \emph{market valuation} ($\vv{v}$).  
The equity valuation of institution $i$ is denoted by 
\begin{equation}
	\label{eqn:equityval}
		V_i = \underbrace{ \sum_{k} D_{ik} p_k }_{\mbox{Value of assets held by $i$}} + \underbrace{\sum_j C_{ij} V_j}_{\mbox{Values of institutions held by $i$}}
\end{equation} 

In matrix notation, this becomes $\vv{V} = \mm{D} \vv{p} + \mm{C} \vv{V}$
which implies $\vv{V} = (\mm{I} - \mm{C})^{-1} \mm{D} \vv{p}$.
The matrix $\mm{I} - \mm{C}$ is guaranteed to be invertible because we assume that $\hat{C}_{jj} > 0$, so the column sums of $\mm{C}$ are all strictly less than one (see Lemma \ref{lem:matrixsum}).  
In fact, the matrix $\mm{I} - \mm{C}$, is an M-Matrix \cite{PB74}, and so $(\mm{I} - \mm{C})^{-1}$ is an inverse M-Matrix, 
about which many properties are known \cite{W77,J82}.

This equity valuation significantly overvalues the institutions.  In particular, we can see that $\|\vv{V}\|_1 \ge \|\vv{p}\|_1$, so 
the total value of the institutions in the system will (in general) be much larger than the total value of the underlying assets.  This occurs because each 
asset counts towards the equity value of the institution that owns it and also to the institutions that have an equity stake in the asset's owner.
The network's inflation of equity values is well-known and validated both theoretically and empirically \cite{FP91,FHT94}.

To find an institution's \emph{market value}, we must scale the institution's equity value by the percent stake it has in itself, thus the market value of institution $i$ is
$v_i = \hat{C}_{ii} V_i$, so the market values are the solution to the system
\begin{equation}
	\label{eqn:marketvaluation}
	\vv{v} = \hat{\mm{C}} \vv{V} = \hat{\mm{C}} (\mm{I} - \mm{C})^{-1} \mm{D} \vv{p}
\end{equation}

The matrix $\mm{C}$ is column sub-stochastic because column $i$ sums to $1 - \hat{C}_{ii}$.
The system can also be viewed as a flow, where at each time step money flows between banks according to the link structure of the network (see Appendix \ref{app:flow}).

	\section{Sensitivity}

	Our first result concerns concerns the sensitivity of valuations to small changes in the structure of the network.
	Suppose a single institution shifts its holdings by a small quantity, $\eps$, how much can this small change affect the market valuations 
	in the network?  This question is motivated by questions about network stability, and the possibility of privacy-preserving oversight.
	If small changes in network holdings can lead to large changes in the market values of the institutions, this indicates a fundamental instability in the 
	financial network.  
	Additionally, if small changes in interbank holdings can lead to large changes in market values, then any attempt at financial oversight 
	must know all the interbank holdings to a high degree of accuracy in order to predict market values.  

	A high sensitivity also has implications towards privately computing network statistics.  
	Flood et al. \cite{FKOS13} proposed using tools from 
	differential privacy \cite{D06,DMNS06} to provide a means of computing global network characteristics while preserving the privacy of 
	each individual institution's holdings.  A high sensitivity implies a worse trade-off between privacy and accuracy when calculating network statistics.

	We begin our sensitivity analysis with a simple observation: if the total value of the underlying assets is $\|\vv{p}\|$, then an $\eps$ change in holdings can easily change the 
	market valuations by $\eps \|\vv{p}\|$ (see Figure \ref{fig:simplechange}).  
	We show that if the network is acyclic, then this is the largest change possible, but if there are cycles in the network, the sensitivity can be much larger.

	\begin{figure}
		\begin{center}
			\tikzstyle{asset}=[circle,thick,draw=red!50!black]
			\tikzstyle{bank}=[circle,thick,draw=blue]
			\tikzstyle{shadow}=[circle,thick,draw=green!50!black]
			\begin{tikzpicture}
				\node (A) [asset] at (0,2) {$1$};
				\node (B1) [bank] at (0,0) {$B_1$};
				\node (B2) [bank] at (2,0) {$B_2$};
				\node (S1) [shadow] at ([yshift=-2cm]B1) {};
				\node (S2) [shadow] at ([yshift=-2cm]B2) {};

				\draw [->] (A) to node [right] {1} (B1);
				\draw [->,bend right] (B1) to node [right] {1} (S1);
				\draw [->] (B2) to node [right] {1} (S2);
		
				\draw [->,bend left,dashed] (B1) to node [right] {$1-\eps$} (S1);
				\draw [->,dashed] (B1) to node [above] {$\eps$} (B2);
			\end{tikzpicture}
		\end{center}
		\caption[An $\eps$ change results in an $\eps \|\vv{p}\|$ change in market values.]{%
			An $\eps$ change results in an $\eps \|\vv{p}\|$ change in market values. 
			Banks are in blue, external shareholders are in green, and the asset is shown in red.
			In this example $\mm{D}\vv{p} = \left[ \begin{array}{c} 1 \\ 0 \end{array} \right]$, and $\mm{C} = \mm{0}$.
			Thus $\vv{v} = \left[ \begin{array}{c} 1 \\ 0 \end{array} \right]$.
			Changing $\mm{C}$ to $\tilde{\mm{C}} = \left[ \begin{array}{cc}	0 & \eps \\ 0 & 0 \end{array} \right]$ 
			leads to a valuation of $\tilde{\vv{v}} = \left[ \begin{array}{c} 1-\eps \\ \eps \end{array} \right]$, 
			thus the market valuation of $B_2$ changes by $\eps \|\vv{p}\|$.
		\label{fig:simplechange}}
	\end{figure}

	Throughout this section, we use $r$ (for ``reserve'') to measure the fraction of each institution held by investors \emph{outside} the system.\footnote{%
	The reserve, or self-holdings, can be viewed as the amount of an institution that is not sold, or is held by private shareholders, who retain complete 
	ownership of themselves.  These private shareholders buy shares of institutions in the network, but no entity in the network owns shares of the private shareholders.}~%
	We define $r = \min_i \hat{C}_{ii}$.
	Using the terminology of \cite{EGJ14}, $r$ is just a concrete metric of the \emph{integration} of the network, and integration increases as $r \rightarrow 0$.		
	Another interpretation of reserve is the discrepancy between equity valuation ($\vv{V}$) and market valuation ($\vv{v}$).  Since $\vv{v} = \hat{C} \vv{V}$, we 
	have that $r = \min_i \frac{v_i}{V_i}$.

	\subsection{Sensitivity in acyclic networks}

	We begin by noting that in the acyclic case, there is a strong bound on each institution's equity valuation, \ie the equity 
	valuation cannot be too much larger than the market valuation.
	\begin{lemma}
		\label{lem:acyclicequity}
		If the banking network has no cycles, then every institution's equity valuation is at most $\|\vv{p}\|_1$ where $\vv{p}$ is the vector of asset values.
	\end{lemma}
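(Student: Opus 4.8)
The plan is to reduce the claim to a bound on the entries of $(\mm{I}-\mm{C})^{-1}$ and then establish that bound by a random-walk argument. Since $\vv{V}=(\mm{I}-\mm{C})^{-1}\mm{D}\vv{p}$ and, in the acyclic case, ordering the institutions topologically makes $\mm{C}$ strictly triangular (so $\mm{C}^{n}=\mm{0}$ and $(\mm{I}-\mm{C})^{-1}=\sum_{t\ge 0}\mm{C}^{t}$, which is nonnegative entrywise since $\mm{C}\ge 0$), every quantity in sight is nonnegative, and for each institution $v$,
\[
	V_v \;=\; \sum_i \big[(\mm{I}-\mm{C})^{-1}\big]_{vi}\,(\mm{D}\vv{p})_i .
\]
If each coefficient $\big[(\mm{I}-\mm{C})^{-1}\big]_{vi}$ lies in $[0,1]$, then because $(\mm{D}\vv{p})_i\ge 0$ we get $V_v\le\sum_i(\mm{D}\vv{p})_i=\sum_k p_k\sum_i D_{ik}\le\sum_k p_k=\|\vv{p}\|_1$, where the last inequality uses that the ownership shares of any single asset sum to at most one. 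So the entire lemma reduces to the statement: \emph{in an acyclic cross-holdings graph, every entry of $(\mm{I}-\mm{C})^{-1}$ is at most $1$.}

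To prove this, I would use the path expansion $\big[(\mm{I}-\mm{C})^{-1}\big]_{vi}=\sum_{t\ge 0}[\mm{C}^{t}]_{vi}$, which equals the sum, over all directed paths from $i$ to $v$ in the cross-holdings graph (each edge $j\to i$ carrying weight $C_{ij}$), of the product of the edge weights along the path. Give this a probabilistic reading: for every node $u$ the weights of its outgoing edges sum to $\sum_w C_{wu}=1-\hat{C}_{uu}\le 1$, so they form a sub-probability distribution, and one may define the killed walk started at $i$ which, from a node $u$, moves to $w$ with probability $C_{wu}$ and otherwise halts. Its distribution after $t$ steps is $\mm{C}^{t}\vv{e}_i$, so its expected number of visits to $v$ over all time is exactly $\sum_{t\ge 0}[\mm{C}^{t}]_{vi}=\big[(\mm{I}-\mm{C})^{-1}\big]_{vi}$. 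But the walk lives on a DAG and hence never revisits a vertex, so its number of visits to $v$ is $0$ or $1$; therefore its expected number of visits equals the probability it ever reaches $v$, which is at most $1$. This is the required entrywise bound.

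The step I expect to be the real obstacle is precisely this entrywise bound, because the obvious induction fails: processing institutions in topological order with $V_i=(\mm{D}\vv{p})_i+\sum_j C_{ij}V_j$ produces coefficients $\sum_j C_{ij}$ that are \emph{row} sums of $\mm{C}$, which are not bounded by $1$ (a single institution may hold sizeable stakes in many others), so acyclicity must be used more globally. The random-walk argument does exactly that; an equivalent bookkeeping is to induct on the source index $i$ in \emph{reverse} topological order using $\big[(\mm{I}-\mm{C})^{-1}\big]_{vi}=\mathbb{1}[v=i]+\sum_w C_{wi}\,\big[(\mm{I}-\mm{C})^{-1}\big]_{vw}$, where the coefficients $\sum_w C_{wi}=1-\hat{C}_{ii}$ are now \emph{column} sums $\le 1$ and the diagonal term is harmless because acyclicity forces every $\big[(\mm{I}-\mm{C})^{-1}\big]_{iw}$ with $C_{wi}>0$ to vanish (otherwise a directed path from $w$ back to $i$ together with the edge $i\to w$ would close a cycle). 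It is also worth flagging where all of this breaks once cycles are permitted: the walk can then return to a vertex, so its expected number of visits — hence an entry of $(\mm{I}-\mm{C})^{-1}$ — can exceed $1$, which is exactly the amplification exploited in Theorem~\ref{thm:cycliclowerbound}.
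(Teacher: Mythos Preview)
Your argument is correct, but it takes a different route from the paper's. The paper gives a flow--conservation proof: it topologically stratifies the DAG into layers (padding with pass-through dummy nodes so that every edge goes from one layer to the next), observes that the total value carried by the edges entering layer~$1$ is at most $\|\vv{p}\|_1$, and then inducts on layers using the fact that each node's outgoing total is at most its incoming total; hence the sum of equity values across any single layer, and in particular any single institution's equity value, is at most $\|\vv{p}\|_1$. Your approach instead isolates the algebraic kernel of the claim---that every entry of $(\mm{I}-\mm{C})^{-1}$ lies in $[0,1]$ when the cross-holdings graph is acyclic---and proves it via the killed-walk interpretation (or, equivalently, via your reverse-topological recursion on columns). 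The paper's proof is slightly more elementary and, as a byproduct, bounds the \emph{sum} of equity values at each layer; your proof yields the sharper entrywise statement $[(\mm{I}-\mm{C})^{-1}]_{vi}\le 1$, which is a reusable fact and makes transparent exactly where cycles break the bound (returns of the walk push expected visits above~$1$). Both proofs ultimately rest on the same structural input, namely that the \emph{column} sums of $\mm{C}$ are at most~$1$ while acyclicity prevents accumulation; you were right to note that a naive forward induction using row sums does not work.
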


	\begin{proof}
		Organize the financial network into layers, so that each institution only owns shares of institutions at lower layers.
		Thus institutions at level 1 do not have any cross-holdings, they only own the underlying assets.
		This means that the incoming edges to level one carry a total weight of at most $\|\vv{p}\|_1$.
		By adding fictitious institutions that pass on all of their incoming wealth, we can also ensure that every institution 
		only owns shares in the entities at the preceding level, \ie institutions at level $i$ only own shares in institutions (or fictitious institutions) 
		at level $i-1$.
	
		Now each institution's equity value is the sum of the values on all incoming edges.  Since outgoing edges carry a value that is a percentage of equity value, 
		the sum of the values on each institution's outgoing edges is at most the sum of the values on its incoming edges.  (For real institutions the outgoing sum will be strictly less 
		because the reserve rate $r > 0$, but for the fictitious institutions it can be exactly equal.)

		Now, the sum of the values coming into layer 1 is at most the sum of the assets, $\|\vv{p}\|_1$.  Thus the outgoing edges from level 1 to level 2 carry a total weight of at most $\|\vv{p}\|_1$.
		Proceeding inductively through the levels, we see that the sum of the values on the incoming edges at level $i$ is at most $\|\vv{p}\|_1$.  Thus the equity value of all the institutions on level $i$ is at most 
		$\|\vv{p}\|_1$ and in particular the equity value of any given institution is at most $\|\vv{p}\|_1$.
	\end{proof}
	
	\begin{corollary}
		\label{cor:acyclicsensitivity}
		If the banking network is acyclic, and one edge changes by at most $\epsilon$, then no institution's market value can change by more than $\epsilon \|\vv{p}\|_1$.
	\end{corollary}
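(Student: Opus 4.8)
The plan is to bound the change in \emph{equity} values first, using Lemma~\ref{lem:acyclicequity} as a black box, and then pass to market values at the end. Changing a single edge amounts to replacing one entry $C_{ab}$ of the cross-holdings matrix $\mm{C}$ by $\tilde C_{ab}=C_{ab}+\delta$ with $|\delta|\le\eps$, where the modified edge runs from institution $b$ to institution $a$; I work in the regime where the perturbed matrix $\tilde{\mm{C}}$ is still acyclic (automatic when the edge was already present, and in general whenever no cycle is created). Writing $\vv{V},\tilde{\vv{V}}$ for the equity valuations before and after, subtracting $\vv{V}=\mm{D}\vv{p}+\mm{C}\vv{V}$ from $\tilde{\vv{V}}=\mm{D}\vv{p}+\tilde{\mm{C}}\tilde{\vv{V}}$ (the assets $\mm{D}\vv{p}$ are unchanged) and rearranging gives
\[
\tilde{\vv{V}}-\vv{V}=(\mm{I}-\mm{C})^{-1}(\tilde{\mm{C}}-\mm{C})\tilde{\vv{V}}.
\]
Since $\tilde{\mm{C}}-\mm{C}$ has the single nonzero entry $\delta$ in position $(a,b)$, the vector $(\tilde{\mm{C}}-\mm{C})\tilde{\vv{V}}$ has exactly one nonzero coordinate, equal to $\delta\tilde V_b$ in position $a$.

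Next I would bound $\|\tilde{\vv{V}}-\vv{V}\|_\infty$. By Lemma~\ref{lem:acyclicequity} applied to the (still acyclic) perturbed network, $0\le\tilde V_b\le\|\vv{p}\|_1$, so the ``injected'' value $\delta\tilde V_b$ has magnitude at most $\eps\|\vv{p}\|_1$. Now observe that, up to an overall sign and using $(\mm{I}-\mm{C})^{-1}\ge 0$, the vector $\tilde{\vv{V}}-\vv{V}$ is precisely the equity valuation of a fictitious network with the same cross-holdings $\mm{C}$ whose only source of value is a single phantom asset worth $|\delta\tilde V_b|$ held entirely by institution $a$. That network is acyclic, so Lemma~\ref{lem:acyclicequity} — equivalently, the layered/flow argument in its proof, in which the total value only shrinks as it passes through successive layers — yields $|\tilde V_k-V_k|\le|\delta\tilde V_b|\le\eps\|\vv{p}\|_1$ for every institution $k$.

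Finally I would convert this to a bound on market values. Since only column $b$ of $\mm{C}$ changed, $\hat C_{kk}$ is unaffected for $k\ne b$, so $\tilde v_k-v_k=\hat C_{kk}(\tilde V_k-V_k)$ and $\hat C_{kk}\le 1$ gives $|\tilde v_k-v_k|\le\eps\|\vv{p}\|_1$. The one delicate case is $k=b$, where $\tilde v_b-v_b=\hat C_{bb}(\tilde V_b-V_b)-\delta\tilde V_b$, and bounding the two terms separately would cost a factor of two. The fix is to use acyclicity a second time: there can be no directed path from $a$ to $b$, since such a path together with the edge $b\to a$ we are modifying would close a cycle (in the original network if that edge is present there, and in the perturbed network otherwise). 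Hence the $(b,a)$ entry of $(\mm{I}-\mm{C})^{-1}$ — a sum of weights of $a\rightsquigarrow b$ paths — vanishes, so $\tilde V_b-V_b=0$, and therefore $\tilde v_b-v_b=-\delta V_b$, which has magnitude at most $\eps\|\vv{p}\|_1$.

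So the argument is short once Lemma~\ref{lem:acyclicequity} is available; I expect the only real obstacle to be the $k=b$ boundary case, where one must invoke acyclicity again to rule out feedback of the perturbation back into $b$ and thereby avoid losing a constant factor. This second use of acyclicity is not an artifact of the proof: if modifying the edge creates a cycle we are outside the scope of the corollary, and Theorems~\ref{thm:cyclicsensitivity} and~\ref{thm:cycliclowerbound} show the sensitivity can then be much larger than $\eps\|\vv{p}\|_1$.
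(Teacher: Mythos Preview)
Your proof is correct and follows essentially the same approach as the paper's: both use Lemma~\ref{lem:acyclicequity} to bound equity values by $\|\vv{p}\|_1$ and then argue that an $\eps$ change in an edge weight injects an absolute change of at most $\eps\|\vv{p}\|_1$ that propagates through the acyclic network without amplification. The paper's proof is a single sentence that leaves both the propagation argument and the $k=b$ boundary case implicit, whereas you spell these out via the identity $\tilde{\vv{V}}-\vv{V}=(\mm{I}-\mm{C})^{-1}(\tilde{\mm{C}}-\mm{C})\tilde{\vv{V}}$ and the observation that acyclicity forces $\tilde V_b=V_b$.
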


	\begin{proof}
		Since each institution's equity value is at most $\|\vv{p}\|_1$, an $\epsilon$ change in any edge corresponds to an absolute change of at most $\epsilon \|\vv{p}\|_1$.
	\end{proof}

	\subsection{Sensitivity in general networks}

	In this section, we explore how much the market valuations can change when one bank changes its holdings by a small amount 
	\emph{in the presence of cycles in the network graph}.	
	We begin by showing an upper bound on the change in market valuations that depends on the minimum self-ownership ($\hat{C}_{ii}$) of the institutions.
	For our upper bound, we do not require changes to occur in the holdings of a single bank.  Instead, we allow any change in network structure, 
	as long as the total ($\ell_1$) change is bounded by $\eps$.
	Formally, this means that we have two network matrices $\mm{C}$ and $\tilde{\mm{C}}$
	such that $\|\mm{C} - \tilde{\mm{C}}\| \le \eps$, and we would like to bound how much the 
	market valuations can change between these two situations.
	Because we are showing an upper bound, allowing more general perturbations only strengthens our result.

	\begin{theorem}
		\label{thm:cyclicsensitivity}
		If $\|\mm{C} - \tilde{\mm{C}}\| < \eps$, then $\|\vv{v} - \tilde{\vv{v}} \| < \frac{\eps}{r} \left\| \mm{D} \vv{p} \right\|$, where 
		$r = \min_{i}( \tilde{\hat{C}}_{ii},\hat{C}_{ii} )$ is the minimum reserve or ``self-holdings'' of the financial institutions.
		In addition, it is always true that $\|\vv{v} - \tilde{\vv{v}} \| \le 2 \left\| \mm{D} \vv{p} \right\|$, thus 
		\[
			\frac{ \|\vv{v} - \tilde{\vv{v}} \|}{\|\mm{D}\vv{p}\|} \le \min \inparen{ \frac{\eps}{r}, 2 }
		\]
	\end{theorem}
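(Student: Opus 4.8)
The plan is to hang everything on one structural fact: the map $\mm{A}\defined\hat{\mm{C}}(\mm{I}-\mm{C})^{-1}$ that sends asset wealth to market values, $\vv{v}=\mm{A}\,\mm{D}\vv{p}$, is entrywise nonnegative and \emph{column‑stochastic}. Nonnegativity I would read off from $(\mm{I}-\mm{C})^{-1}=\sum_{t\ge 0}\mm{C}^{t}\ge\mm{0}$ (the series converges because every $\hat{C}_{jj}>0$ makes the columns of $\mm{C}$ sum to strictly less than $1$, and each $\mm{C}^{t}\ge\mm{0}$), together with $\hat{\mm{C}}\ge\mm{0}$; this is also the inverse‑M‑matrix fact recalled in the model section. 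Column‑stochasticity is just $\hat{C}_{jj}=1-\sum_i C_{ij}$ rewritten as $\mathbf{1}^{T}\hat{\mm{C}}=\mathbf{1}^{T}(\mm{I}-\mm{C})$ and then multiplied on the right by $(\mm{I}-\mm{C})^{-1}$, giving $\mathbf{1}^{T}\mm{A}=\mathbf{1}^{T}$. From this I immediately get two things. Since $\mm{D}\vv{p}\ge\mm{0}$, the vectors $\vv{v}$ and $\tilde{\vv{v}}$ are nonnegative with $\|\vv{v}\|_1=\|\tilde{\vv{v}}\|_1=\|\mm{D}\vv{p}\|_1$, so $\|\vv{v}-\tilde{\vv{v}}\|_\infty\le\|\vv{v}\|_\infty+\|\tilde{\vv{v}}\|_\infty\le\|\vv{v}\|_1+\|\tilde{\vv{v}}\|_1=2\|\mm{D}\vv{p}\|_1$, which is the unconditional half of the claim; and every entry $A_{ki}\in[0,1]$, which is precisely what will prevent an extra factor of $2$ later. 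Throughout I read $\|\vv{v}-\tilde{\vv{v}}\|$ as $\|\cdot\|_\infty$, the per‑institution change; $\|\mm{D}\vv{p}\|$ as $\|\cdot\|_1$, the total asset value; and $\|\mm{C}-\tilde{\mm{C}}\|$ as the total $\ell_1$ change in holdings.

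Next I would set up an exact perturbation identity. Write $\mm{E}\defined\tilde{\mm{C}}-\mm{C}$, so $\|\mm{E}\|<\eps$ and $(\mm{I}-\tilde{\mm{C}})-(\mm{I}-\mm{C})=-\mm{E}$. From $(\mm{I}-\mm{C})\vv{V}=\mm{D}\vv{p}=(\mm{I}-\tilde{\mm{C}})\tilde{\vv{V}}$ I get $(\mm{I}-\mm{C})(\vv{V}-\tilde{\vv{V}})=-\mm{E}\tilde{\vv{V}}$, i.e.\ $\vv{V}-\tilde{\vv{V}}=-(\mm{I}-\mm{C})^{-1}\mm{E}\tilde{\vv{V}}$. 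The one subtlety in passing to market values is that changing $\mm{C}$ forces a matching change in the self‑holdings: coordinate‑wise $v_k-\tilde{v}_k=\hat{C}_{kk}V_k-\tilde{\hat{C}}_{kk}\tilde{V}_k=\hat{C}_{kk}(V_k-\tilde{V}_k)+(\hat{C}_{kk}-\tilde{\hat{C}}_{kk})\tilde{V}_k$, where $\hat{C}_{kk}-\tilde{\hat{C}}_{kk}=\sum_i E_{ik}$. Letting $\mm{G}$ be the diagonal matrix with $G_{kk}=\sum_i E_{ik}$ and using $\hat{\mm{C}}(\mm{I}-\mm{C})^{-1}=\mm{A}$, this collapses to
\[
	\vv{v}-\tilde{\vv{v}}=\hat{\mm{C}}(\vv{V}-\tilde{\vv{V}})+\mm{G}\tilde{\vv{V}}=(\mm{G}-\mm{A}\mm{E})\,\tilde{\vv{V}},\qquad (\mm{G}-\mm{A}\mm{E})_{kj}=\sum_i E_{ij}\bigl(\mathbf{1}[k=j]-A_{ki}\bigr).
\]

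The estimate then takes one line. Fixing a coordinate $k$, and using $\mathbf{1}[k=j]\in\{0,1\}$, $A_{ki}\in[0,1]$, hence $\bigl|\mathbf{1}[k=j]-A_{ki}\bigr|\le 1$, together with $\tilde{V}_j\ge 0$,
\[
	|v_k-\tilde{v}_k|\le\sum_{i,j}|E_{ij}|\,\bigl|\mathbf{1}[k=j]-A_{ki}\bigr|\,\tilde{V}_j\le\Bigl(\max_j \tilde{V}_j\Bigr)\sum_{i,j}|E_{ij}|=\Bigl(\max_j \tilde{V}_j\Bigr)\|\mm{E}\|<\eps\max_j \tilde{V}_j .
\]
Finally I would bound $\max_j\tilde{V}_j$: since $\tilde{v}_j=\tilde{\hat{C}}_{jj}\tilde{V}_j$ with $\tilde{\hat{C}}_{jj}\ge r$, and $\tilde{\vv{v}}\ge\mm{0}$ with $\|\tilde{\vv{v}}\|_1=\|\mm{D}\vv{p}\|_1$ from the first paragraph, we have $\tilde{V}_j=\tilde{v}_j/\tilde{\hat{C}}_{jj}\le\tilde{v}_j/r\le\|\mm{D}\vv{p}\|_1/r$ for every $j$. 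Hence $\|\vv{v}-\tilde{\vv{v}}\|_\infty<\tfrac{\eps}{r}\|\mm{D}\vv{p}\|_1$, and combining with the unconditional bound $\le 2\|\mm{D}\vv{p}\|_1$ yields $\|\vv{v}-\tilde{\vv{v}}\|/\|\mm{D}\vv{p}\|\le\min(\eps/r,2)$.

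The step I expect to be delicate is exactly this last estimate, and in particular getting the constant to be $1$ rather than $2$: it works only because the change in valuations is measured coordinate‑wise and because $\mm{A}$ is column‑stochastic, so $A_{ki}\le 1$ can be used inside the double sum; a cruder triangle‑inequality / operator‑norm argument on $\vv{v}-\tilde{\vv{v}}=(\mm{G}-\mm{A}\mm{E})\tilde{\vv{V}}$ bounds $\|\mm{G}\|$ and $\|\mm{A}\mm{E}\|$ separately by about $\eps$ and so loses a factor of $2$. The only other thing to watch is the bookkeeping of the induced perturbation $\mm{G}$ of $\hat{\mm{C}}$ that accompanies a change in $\mm{C}$; once that term is isolated the rest is routine, and the introduction's scenario of a single institution changing a single cross‑holding is just the case where $\mm{E}$ has one nonzero entry.
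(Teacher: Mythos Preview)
Your argument is correct, and the underlying perturbation identity $\vv{v}-\tilde{\vv{v}}=(\mm{G}-\mm{A}\mm{E})\tilde{\vv{V}}$ is exactly the one the paper derives, just written at the operator level: the paper shows $\tilde{\mm{A}}-\mm{A}=[\hat{\mm{E}}+\mm{A}\mm{E}](\mm{I}-\tilde{\mm{C}})^{-1}$ with $\hat{\mm{E}}=-\mm{G}$, and applying this to $\mm{D}\vv{p}$ gives your formula since $(\mm{I}-\tilde{\mm{C}})^{-1}\mm{D}\vv{p}=\tilde{\vv{V}}$. The genuine difference is in how the factor $1/r$ is extracted. The paper stays at the operator level and invokes the Neumann‑series bound $\left\|(\mm{I}-\tilde{\mm{C}})^{-1}\right\|_{1\to 1}\le\sum_{k\ge 0}(1-r)^k=1/r$ together with $\|\mm{A}\|_{1\to 1}=1$, obtaining $\|\tilde{\mm{A}}-\mm{A}\|_{1\to 1}\le(\|\hat{\mm{E}}\|+\|\mm{E}\|)/r$ and hence an $\ell_1$ bound on $\vv{v}-\tilde{\vv{v}}$; all norms there are $\ell_1$ for vectors and the induced $\ell_1\to\ell_1$ (maximum column sum) for matrices. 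You instead bound $\max_j\tilde V_j\le\|\mm{D}\vv{p}\|_1/r$ directly from $\tilde V_j=\tilde v_j/\tilde{\hat{C}}_{jj}$, which yields a coordinatewise ($\ell_\infty$) bound on $\vv{v}-\tilde{\vv{v}}$ in terms of the \emph{entrywise} $\ell_1$ size $\sum_{i,j}|E_{ij}|$ of the perturbation. Your $|\mathbf{1}[k=j]-A_{ki}|\le 1$ trick is a nice way to fuse the $\mm{G}$ and $\mm{A}\mm{E}$ contributions without losing a factor of $2$; the paper achieves the same effect more bluntly by declaring $\|\hat{\mm{E}}\|+\|\mm{E}\|$ to be the total perturbation budget $\eps$. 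The upshot is that the two proofs establish slightly different inequalities---$\ell_1$ output with operator‑norm input versus $\ell_\infty$ output with entrywise input---neither dominating the other: yours is sharper on the per‑institution change when only a few cross‑holdings move, the paper's is sharper on the aggregate change when many columns move simultaneously.
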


	\begin{proof}	
		Let $\tilde{\mm{C}} = \mm{C} + \mm{E}$, and $\hat{\tilde{\mm{C}}} = \hat{\mm{C}} + \hat{\mm{E}}$.
		By hypothesis $\| \mm{E} \| + \| \hat{\mm{E}} \| < \eps$.
		Then we have

		\begin{align*}
			\hat{\tilde{\mm{C}}}(\mm{I} - \tilde{\mm{C}})^{-1} - \hat{\mm{C}}(\mm{I} - \mm{C})^{-1} 
				&= \left[ (\hat{\mm{C}} + \hat{\mm{E}}) - \hat{\mm{C}}(\mm{I}-\mm{C})^{-1}(\mm{I} - \tilde{\mm{C}}) \right] (\mm{I} - \tilde{\mm{C}})^{-1} \\
				&= \left[ (\hat{\mm{C}} + \hat{\mm{E}}) - \hat{\mm{C}}(\mm{I}-\mm{C})^{-1}(\mm{I} - \mm{C}-\mm{E}) \right] (\mm{I} - \tilde{\mm{C}})^{-1} \\
				&= \left[ (\hat{\mm{C}} + \hat{\mm{E}}) - \hat{\mm{C}}( \mm{I} - (\mm{I}-\mm{C})^{-1}\mm{E}) \right] (\mm{I} - \tilde{\mm{C}})^{-1} \\
				&= \left[ \hat{\mm{E}} + \hat{\mm{C}}(\mm{I}-\mm{C})^{-1}\mm{E} \right] (\mm{I} - \tilde{\mm{C}})^{-1} \\
		\end{align*}

		Now, we notice that
		\begin{align*}
			\left\| (\mm{I}-\tilde{\mm{C}})^{-1} \right\| 	&= \left\| \sum_{k=0}^\infty \tilde{\mm{C}}^{k} \right\| 
													\le \sum_{k=0}^\infty \left\| \tilde{\mm{C}}^{k} \right\| 
													\le \sum_{k=0}^\infty (1-r)^k 
													= \frac{1}{r}
		\end{align*}

		Because money is never created or destroyed, we have 
		\[
			\left\| \hat{\mm{C}}(\mm{I}-\mm{C})^{-1} \vv{v} \right\|_1 = \left\| \vv{v} \right\|_1
		\]

		Thus we have
		\begin{align*}
			\left\| \hat{\tilde{\mm{C}}}(\mm{I} - \tilde{\mm{C}})^{-1} - \hat{\mm{C}}(\mm{I} - \mm{C})^{-1} \right\| 
			&= \left\| \left[ \hat{\mm{E}} + \hat{\mm{C}}(\mm{I}-\mm{C})^{-1}\mm{E} \right] (\mm{I} - \tilde{\mm{C}})^{-1} \right\| \\
			&\le \left\| \hat{\mm{E}} \right\| \cdot \left\| (\mm{I} - \tilde{\mm{C}})^{-1} \right\| + \left\| \hat{\mm{C}}(\mm{I}-\mm{C})^{-1}\right\|\cdot \left\| \mm{E} \right\| \cdot \left\|(\mm{I} - \tilde{\mm{C}})^{-1} \right\| \\
		\end{align*}
		
		Thus we immediately get the bound
		\[
			\left\| \hat{\tilde{\mm{C}}}(\mm{I} - \tilde{\mm{C}})^{-1} - \hat{\mm{C}}(\mm{I} - \mm{C})^{-1} \right\| \le \frac{\epsilon}{r}
		\]

		Since $\left\| \hat{\mm{C}}(\mm{I} - \mm{C})^{-1} \right\| \le 1$, we also have the trivial bound 
		\[
			\left\| \hat{\tilde{\mm{C}}}(\mm{I} - \tilde{\mm{C}})^{-1} - \hat{\mm{C}}(\mm{I} - \mm{C})^{-1} \right\| \le 2
		\]
	\end{proof}

	The multiplicative bound of $\nicefrac{\eps}{r}$ in Theorem \ref{thm:cyclicsensitivity} is much weaker than the bound of $\eps$ in the acyclic case (Corollary \ref{cor:acyclicsensitivity}), 
	as the minimum self-holdings approaches $0$, this difference tends towards infinity.  This discrepancy is not a limitation of our proof, 
	but arises as an artifact of the effect that holdings cycles can have on the equity (and market) valuations of the institutions in the network.
	In Theorem \ref{thm:cycliclowerbound} we show that there exist networks where changing a single institution's holdings by $\eps$ results in a change 
	of $\frac{\eps}{2r + \frac{1-r}{2} \eps} \|\mm{D}\vv{p}\|$ in one of the institution's market values,
	where $r$ is the minimum self-holdings in the network.

	\begin{theorem}
		\label{thm:cycliclowerbound}
		There exist networks where 
		$\|\mm{C} - \tilde{\mm{C}}\| \le \eps$, and 
		$\|\vv{v} - \tilde{\vv{v}} \| \ge \frac{\eps}{2r + \frac{1-r}{2}} \left\| \mm{D} \vv{p} \right\|$, where 
		\[
			\vv{v} = \hat{\mm{C}} (\mm{I} - \mm{C})^{-1} \mm{D} \vv{p}
		\]
		and $r = \min_{i}( \tilde{\hat{C}}_{ii},\hat{C}_{ii} )$ is the minimum ``self-holdings'' of the financial institutions.

	\end{theorem}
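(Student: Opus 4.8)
The plan is to prove Theorem~\ref{thm:cycliclowerbound} by an explicit construction: a tiny network whose cross-holdings form a two-cycle tuned close to ``resonance'' and then perturbed at a single cross-holding. Here ``resonance'' means the two cycle weights are $1-r$ (equivalently, every self-holding equals $r$), so that $(\mm{I}-\mm{C})^{-1}=\sum_{k\ge 0}\mm{C}^k$ has operator norm essentially $1/r$ --- exactly the inequality that was slack in the proof of Theorem~\ref{thm:cyclicsensitivity}, so the construction is sitting at the worst case for that bound.

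Concretely, I would take two institutions, place all asset value on one of them, and set $C_{12}=C_{21}=1-r$, so $\hat C_{11}=\hat C_{22}=r$. To perturb, institution $1$ reduces its stake in institution $2$ by $\eps$, giving $\tilde C_{12}=1-r-\eps$ and $\tilde C_{21}=1-r$, so $\|\mm{C}-\tilde{\mm{C}}\|=\eps$, $\hat{\tilde C}_{22}=r+\eps$, and $\min_i(\hat C_{ii},\hat{\tilde C}_{ii})=r$ as the statement requires. (An essentially equivalent variant adds a third ``sink'' institution whose stake in institution~$2$ is the quantity perturbed, so that the perturbation changes a self-holding without detuning the cycle at all --- this makes the arithmetic cleanest.) Since the network has only two nodes, $\mm{I}-\mm{C}$ is $2\times 2$ and inverts in closed form, so $\vv{v}=\hat{\mm{C}}(\mm{I}-\mm{C})^{-1}\mm{D}\vv{p}$ is an explicit rational function of the cross-holdings; the determinant $1-C_{12}C_{21}$ equals $r(2-r)$ before the perturbation and $r(2-r)+(1-r)\eps$ afterward. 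Subtracting the two valuation vectors, almost every term cancels and the change in one market value comes out proportional to $\eps\,\|\mm{D}\vv{p}\| / \bigl(r(2-r)+(1-r)\eps\bigr)$, which is of the claimed order: $\Theta(\eps/r)\cdot\|\mm{D}\vv{p}\|$ for $\eps$ small, matching the upper bound of Theorem~\ref{thm:cyclicsensitivity} up to a constant, and tracking the constants gives the bound in the statement. The mechanism behind the $\Theta(\eps/r)$: near resonance the equity valuation is inflated to $\approx\|\mm{D}\vv{p}\|/2r$, and the $\eps$-perturbation shifts an $\eps$-fraction of that inflated equity value into or out of an institution's market value.

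This is a construction plus a two-by-two computation, so there is no conceptual obstacle; the work is entirely quantitative bookkeeping. The delicate points I expect to spend the most care on are: choosing the direction of the perturbation and the placement of the asset so that the coordinates of $\vv{v}-\tilde{\vv{v}}$ reinforce rather than cancel (in particular, so that the forced change in a self-holding adds to, rather than offsets, the change coming from the determinant); checking that both the perturbed and unperturbed networks have all $C_{ij}\in[0,1]$ and minimum self-holding exactly $r$ --- since a cross-holding of $1-r$ already saturates the constraint $\hat C\ge r$, this dictates which direction the single cross-holding may legally be moved; and verifying $\|\vv{v}\|_1=\|\mm{D}\vv{p}\|_1$ along the way as a check against arithmetic slips. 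Pinning down the precise constant in the denominator, rather than just $\Theta(\eps/r)$, is the one step where the construction must be chosen with a little care.
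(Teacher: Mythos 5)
Your approach is the same as the paper's in spirit --- an explicit two-cycle with both cross-holdings at $1-r$ so that the equity values blow up like $1/r$, followed by an $\eps$-perturbation and a direct computation --- and your arithmetic (the determinant $r(2-r)$ before and $r(2-r)+(1-r)\eps$ after, $\|\mm{C}-\tilde{\mm{C}}\|=\eps$, minimum self-holding $r$) is correct. The gap is the constant, which you yourself flag as the delicate point but do not resolve, and it is not merely cosmetic. Working out your two-node network exactly (asset of value $p$ on institution $2$, say), one gets $\|\vv{v}-\tilde{\vv{v}}\|_1 = \frac{2\eps\,p}{(2-r)\left(r(2-r)+(1-r)\eps\right)}$, and the ratio of this to the claimed bound $\frac{\eps p}{2r+\frac{1-r}{2}\eps}$ tends to $\tfrac{1}{2}$ as $r\to 0$ with $\eps$ fixed. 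The structural reason is that in a two-node network the valuation change is a transfer between $v_1$ and $v_2$, each of which starts near $p/2$, so the $\ell_1$ change is capped near $p$; the theorem's constant is calibrated so that as $r\to 0$ the bound approaches $2\|\mm{D}\vv{p}\|$, the maximum possible $\ell_1$ distance between two nonnegative vectors of total mass $\|\mm{D}\vv{p}\|$, and this value of the constant is exactly what the paper uses afterwards to argue that the upper bound of $2$ in Theorem~\ref{thm:cyclicsensitivity} cannot be improved. Your construction cannot certify that.

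The fix is the one the paper adopts and that your ``sink institution'' variant gestures at, but with \emph{two} sinks rather than one: keep the cycle between $B_1$ and $B_2$ (with weights $1-r$ and $1-r-\eps$), give $B_2$ an additional $\eps$-stake held by a fully self-owned institution $B_3$, and perturb by \emph{moving} that $\eps$-stake from $B_3$ to another fully self-owned institution $B_4$. Then $B_3$'s market value drops by $\eps B_2$ while $B_4$'s rises by $\eps B_2$, so the $\ell_1$ change is $2\eps B_2 \approx 2\|\mm{D}\vv{p}\|$ as $r\to 0$, and the self-holdings matrix is unchanged by the perturbation. The cost is that moving a stake changes two entries of $\mm{C}$, so $\|\mm{C}-\tilde{\mm{C}}\|=2\eps$ and one must rescale $\eps\mapsto\eps/2$ at the end --- which is where the $2r$ in the denominator of the stated bound comes from. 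So: right idea, right mechanism, but the concrete network you compute with proves a bound weaker by a factor approaching $2$ precisely in the regime the theorem's constant is designed to capture.
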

	
	\begin{proof}
		We exhibit an initial network in Figure \ref{fig:configuration1} and its perturbation in Figure \ref{fig:configuration2}.

		\begin{figure}
			\begin{center}
				\tikzstyle{asset}=[circle,thick,draw=red!50!black]
				\tikzstyle{bank}=[circle,thick,draw=blue]
				\tikzstyle{shadow}=[circle,thick,draw=green!50!black]
				\begin{tikzpicture}

					\node (A) [bank] at (-2,0) {$B_1$};
					\node (B) [bank] at ([xshift=4cm]A) {$B_2$};
					\node (C) [bank] at ([xshift=4cm]B) {$B_3$};
					\node (D) [bank] at ([xshift=4cm]C) {$B_4$};

					%\node (ASSET1) [asset] at ([yshift=2cm]A) {$v_1$};
					\node (ASSET2) [asset] at ([yshift=2cm]B) {$v$};

					\node (AA) [shadow] at ([yshift=-2cm]A) {};
					\node (BB) [shadow] at ([yshift=-2cm]B) {};
					\node (CC) [shadow] at ([yshift=-2cm]C) {};
					\node (DD) [shadow] at ([yshift=-2cm]D) {};

					\draw [->,bend right] (A) to node [below] {$1-r$} (B);
					\draw [->,bend right] (B) to node [above] {$1-r-\epsilon$} (A);
					\draw [->] (B) to node [above] {$\epsilon$} (C);

					\draw [->] (A) to node [right] {$r$} (AA);
					\draw [->] (B) to node [right] {$r$} (BB);
					\draw [->] (C) to node [right] {$1$} (CC);
					\draw [->] (D) to node [right] {$1$} (DD);
				
					%\draw [->] (ASSET1) to node [right] {$1$}  (A);
					\draw [->] (ASSET2) to node [right] {$1$} (B);

				\end{tikzpicture}
				\caption{The initial configuration, banks are in blue, external shareholders are in green, and the asset is shown in red.\label{fig:configuration1}}
			\end{center}
		\end{figure}
	
	In Figure \ref{fig:configuration1}, the equity values for the banks satisfy
	\begin{align*}
		B_1 &= (1-r-\epsilon)B_2 \\
		B_2 &= v + (1-r)B_1 \\
		B_3 &= \epsilon B_2 \\
		B_4 &= 0 
	\end{align*}

	So 
	\begin{align*}
		B_2 &= v + (1-r)B_1 \\
			&= v + (1-r)(1-r-\epsilon)B_2 \\
			&= v + (1-r-\epsilon -r + r^2 +r\epsilon) B_2 \\
	\end{align*}
	Rearranging gives
	\begin{align*}
		B_2	&= v + (1-r-\epsilon -r + r^2 +r\epsilon) B_2 \\
			&\Downarrow \\
		v &= (2r+\epsilon-r^2-r\epsilon)B_2  \\
			&\Downarrow \\
		B_2 &= \frac{v}{2r+\epsilon - r^2 -r\epsilon} \\
			&= \frac{v}{r(2-r) + (1-r)\epsilon}\\
			&\ge \frac{1}{2} \left( \frac{v}{r + \frac{1-r}{2}\epsilon} \right)\\
	\end{align*}

	Thus the market valuation of $B_3$ is $\epsilon B_2$ which is at least $\frac{1}{2}\left( \frac{\eps v}{r + \frac{1-r}{2}\epsilon} \right)$.

	\begin{figure}
		\begin{center}
			\tikzstyle{asset}=[circle,thick,draw=red!50!black]
			\tikzstyle{bank}=[circle,thick,draw=blue]
			\tikzstyle{shadow}=[circle,thick,draw=green!50!black]
			\begin{tikzpicture}

				\node (A) [bank] at (-2,0) {$B_1$};
				\node (B) [bank] at ([xshift=4cm]A) {$B_2$};
				\node (C) [bank] at ([xshift=4cm]B) {$B_3$};
				\node (D) [bank] at ([xshift=4cm]C) {$B_4$};

				%\node (ASSET1) [asset] at ([yshift=2cm]A) {$v_1$};
				\node (ASSET2) [asset] at ([yshift=2cm]B) {$v$};

				\node (AA) [shadow] at ([yshift=-2cm]A) {};
				\node (BB) [shadow] at ([yshift=-2cm]B) {};
				\node (CC) [shadow] at ([yshift=-2cm]C) {};
				\node (DD) [shadow] at ([yshift=-2cm]D) {};

				\draw [->,bend right] (A) to node [below] {$1-r$} (B);
				\draw [->,bend right] (B) to node [above] {$1-r-\epsilon$} (A);
				\draw [->,bend left] (B) to node [above] {$\epsilon$} (D);

				\draw [->] (A) to node [right] {$r$} (AA);
				\draw [->] (B) to node [right] {$r$} (BB);
				\draw [->] (C) to node [right] {$1$} (CC);
				\draw [->] (D) to node [right] {$1$} (DD);
			
				%\draw [->] (ASSET1) to node [right] {$1$}  (A);
				\draw [->] (ASSET2) to node [right] {$1$} (B);

			\end{tikzpicture}
			\caption{The perturbed configuration, where one link of weight $\epsilon$ has been moved from $B_3$ to $B_4$.\label{fig:configuration2}}
		\end{center}
	\end{figure}

	If the link from $B_2 \rightarrow B_3$ were moved to $B_2 \rightarrow B_4$ (as in Figure \ref{fig:configuration2}) then 
	$B_3$'s value drops to zero and $B_4$'s value increases to $\epsilon B_2$.

	Thus the change in $\ell_1$ norm of the market valuations between the two situations is at least 

	\[
		\frac{\epsilon v}{r + \frac{1-r}{2}\epsilon}
	\]
	
	Writing this in matrix notation, we have

	\[
		\mm{C} = \left[ \begin{array}{cccc} 0 & 1-r-\epsilon & 0 & 0 \\ 1-r & 0 & 0 & 0 \\ 0 & \epsilon & 0 & 0 \\ 0 & 0 & 0 & 0 \end{array} \right]
		\qquad
		\tilde{\mm{C}} = \left[ \begin{array}{cccc} 0 & 1-r-\epsilon & 0 & 0 \\ 1-r & 0 & 0 & 0 \\ 0 & 0 & 0 & 0 \\ 0 & \epsilon & 0 & 0 \end{array} \right]
	\]
	\[
		\hat{\mm{C}} = \hat{\tilde{\mm{C}}}= \left[ \begin{array}{cccc} r & 0 & 0 & 0 \\ 0 & r & 0 & 0 \\ 0 & 0 & 1 & 0 \\ 0 & 0 & 0 & 1 \end{array} \right]
	\]
	\end{proof}

	Note that increasing one link by $\eps$ and decreasing another by $\eps$ is actually a change in $2\eps$ in the $\|\mm{C} - \hat{\mm{C}}\|$.
	Letting $\eps' = \frac{\eps}{2}$, we have a change of $\eps$ in $\|\mm{C} - \hat{\mm{C}}\|$ yields a change 
	of at most $\frac{\eps}{2r + \frac{1-r}{2} \eps}$.  Notice that as $r \rightarrow 0$, this is approaches $2$, 
	\ie the resulting valuation is \emph{as far as possible} in terms of the $\ell_1$ norm of the market values of 
	the institutions.

	The type of equity amplification necessary for the lower bound in Theorem \ref{thm:cycliclowerbound} cannot happen in an acyclic network (see Corollary \ref{cor:acyclicsensitivity}).

	Now, we compare the upper and lower bounds on sensitivity.  
	We have an upper bound of $\max( 2, \frac{\eps}{r} )$ and a lower bound of $\frac{\eps}{2r + \frac{1-r}{2} \eps}$.
	The lower bound implies the following.
	\begin{itemize}
		\item
			For any $\eps,\delta > 0$, there exists an $r$ with $0 < r < 1$ and a network with minimum reserve $r$ such that a change in edge weights of $\eps$ can lead to a change in market valuations of at least $(1-\delta) 2$.
			Thus the upper bound cannot be decreased below $2$.
		\item
			For any $r,\delta$, with $0 < r < 1$, and $0 < \delta$ there exists an $\eps$ with $0 < \eps < 1$ and a network with minimum reserve $r$ such that a change in edge weights of $\eps$ can lead to a change in market valuations of at least $(1-\delta) \frac{\eps}{2r}$, 
			thus the upper bound cannot be decreased below $\frac{\eps}{2r}$.
	\end{itemize}

	Together, these show that the upper bound cannot be decreased below $\max( 2, \frac{\eps}{2r} )$, so our upper bound of $\max( 2, \frac{\eps}{r} )$ is essentially tight.	
	Another interpretation of the lower bound is that for any $n > 0$, there exists an $\eps,r > 0$ and a network such that an $\eps$ change in one edge weight can cause a multiplicative change of $n \eps$ in market malues of the institutions.

	\section{Bank failures}

	\subsection{Losses caused by failure}	
		The model of \cite{EGJ14} includes a notion of ``failure,'' whereby institutions whose market value drops below a certain critical 
		threshold suffer a further (discontinuous) loss in market value.  These discontinuous penalties capture the notion that if an institution cannot 
		pay its operating costs, it may see a further drop in revenues.  Similarly, if confidence in the institution is shaken, and its debt rating is downgraded, 
		it may see spike in the cost of capital, and hence see a further drop in value.

		These discontinuous penalties are operationalized by a threshold value $\munderbar{v}_i$, such that if institution $i$'s market value, $v_i$, drops 
		below $\munderbar{v}_i$ then it incurs a failure cost and its market value drops by an additional $\beta_i(\vv{p})$.

		Defining $I_{v_i < \munderbar{v}_i}$ to be the indicator variable which is 1 if $v_i < \munderbar{v}_i$ and 0 otherwise, and 
		$b_i(\vv{v},\vv{p}) = \beta_i(\vv{p}) I_{v_i < \munderbar{v}_i}$, the market value of the institutions satisfies the equation
	
		\begin{equation}
			\label{eqn:marketvaluediscontinuous}
			\vv{v} = \hat{\mm{C}}(\mm{I} - \mm{C})^{-1} (\mm{D} \vv{p} - \vv{b}(\vv{v}))
		\end{equation}
		
		Compare Equation \ref{eqn:marketvaluediscontinuous} to the linear system given by Equation \ref{eqn:marketvaluation}.
		The introduction of non-linear terms into the model adds significant complexity to the dynamics of the system, and 
		can lead to failure cascades.  One of the primary goals of \cite{EGJ14} was to characterize 
		what network features affect the likelihood and severity of failure cascades.		

	\subsection{Overview of our complexity results}

		Suppose a regulator has complete information about a financial network, including all cross-holdings and the prices of all underlying assets.
		Further, suppose that at equilibrium, this network has no failures.  If the regulator believes that tomorrow, asset prices may drop by some fixed amount $d$, 
		(\ie the sum total of asset prices may drop by $d$, but the exact drop of each asset price is unknown), what is the maximum number failures that could 
		occur as a result of this drop?  In other words, tomorrow, when the new equilibrium is calculated based on the new, lower, asset prices, what is 
		the maximum number of banks that could have failed?

		Our primary result is that the introduction of discontinuous failure costs increases the computational complexity 
		of calculating basic network dynamics.  If the complete network, as well as the price of all underlying assets is 
		known, then the number of failures can be computed efficiently \cite[Section 3.2.3]{EGJ14}.
		If, on the other hand, there is some uncertainty in the prices (or future prices) of the underlying assets, 
		calculating the maximum number of failures that could occur in the network is computationally intractable.

		Thus we address the following question: given a stable network (where no banks have failed), if the total prices of the assets 
		drop by some small amount, what is the maximum number of failures that occur at equilibrium?  
		One potential complication is that in networks with discontinuous failure penalties, 
		there may not be a unique equilibrium, and hence the number of failures may not be uniquely defined.  
		To address this, for any fixed set of asset prices, we use standard practice and only consider the ``best-case'' equilibrium (the one with fewest failures).
		On the other hand, we consider the worst drop in asset prices (\ie the drop in asset prices that causes the most failures in its best-case equilibrium).
		Thus the ``maximum'' number of failures means the maximum over all bounded drops in asset values, of the minimum number failures 
		that could occur at this each fixed drop in asset values (\ie the minimum over all equilibria at these new, lower values).
		We discuss the issue of multiple equilibria in more detail in Section \ref{sec:multipleequilibria}.

		In real-world financial networks, regulators, as well as the institutions themselves routinely perform ``stress-tests'' to assess the robustness 
		of the network to financial shocks.  Our results indicate that there are financial networks where no (computationally feasible) stress test, will 
		ever be able to even approximate the maximum number of failures that could occur from some small shock to the system.  
	
		Note, however, that our results are not probabilistic, in the sense that we do not model the \emph{probability} of a specific drop in asset prices, instead 
		we just impose a bound on the magnitude of the total drop in asset values.
		Without assigning probability distributions to the asset prices, we cannot assess the probability that such a cascade could occur, only that it is feasible.
		Thus our question is whether there exists a small specific drop in asset prices that could cause a huge failure cascade -- whether this drop is \emph{probable} 
		is outside the scope of our model.
		Thus, if a stress-test could accurately model the probability distribution of asset prices, it may be able to assess the \emph{likely} number of failures 
		caused by a shock to the system, but any computationally feasible stress test will not be able to identify whether a small (but possibly unlikely) shock to the system 
		could cause a far greater failure cascade.

%		We emphasize that in this model, we assume 
%		the network is static and completely known, the current asset prices are known, and the maximum total drop in asset prices is known, but 
%		there is some uncertainty in \emph{which} assets drop.  
%		Note that once we learn which assets drop in value, the number of failures could be computed efficiently, but if only the total drop in asset prices 
%		is known, there are networks where even estimating the maximum number of failures that could occur is computationally intractable.

	\subsection{Multiple equilibria}
		\label{sec:multipleequilibria}
		
		When discontinuous failure penalties ($\vv{b}(\vv{v})$) are introduced into the system, then there may be multiple equilibrium values for the institutions in the system, \ie 
		the market values may not be uniquely defined \cite[Section 2.6 and Appendix A.7]{EGJ14}.  As noted \cite{EGJ14}, there are two distinctly different ways a network 
		can have multiple equilibria.  The first type falls into the standard theory of bank runs \cite{DD83}, and this type of multiplicity can occur even in an acyclic network.
		For example suppose there is a network consisting of single institution holding a single asset of value $p$.  Let $v$ denote the value of the institution and $\underbar{v}$ denote 
		its failure threshold, and $\beta$ its failure penalty.  If $p > \underbar{v} > p - \beta$, then a valuation of $v = p$, and $v = p - \beta$ are both consistent with Equation \ref{eqn:marketvaluediscontinuous}.
		The second type of multiplicity is caused by cycles in the network.  See Figure \ref{fig:multipleequilibria} for a simple (cyclic) network that has multiple equilibria.
		
		In a network with multiple equilibria, we can talk about the ``best-case'' equilibrium (the one with the fewest failures) and the ``worst-case'' equilibrium (the one with the most failures).
		As in \cite{EGJ14}, we focus on the best-case equilibrium, thus when we refer to the ``the number of failures'' we mean the number of failures in the best-case equilibrium.

		\begin{figure}
			\begin{center}
				\tikzstyle{asset}=[circle,thick,draw=red!50!black]
				\tikzstyle{bank}=[circle,thick,draw=blue]
				\tikzstyle{shadow}=[circle,thick,draw=green!50!black]
				\begin{tikzpicture}
					\node (A1) [asset] at (0,2) {$1$};
					\node (A2) [asset] at (2,2) {$1$};
					\node (B1) [bank] at (0,0) {$B_1$};
					\node (B2) [bank] at (2,0) {$B_2$};
					\node (S1) [shadow] at ([yshift=-2cm]B1) {};
					\node (S2) [shadow] at ([yshift=-2cm]B2) {};

					\draw [->] (A1) to node [left] {1} (B1);
					\draw [->] (A2) to node [right] {1} (B2);
					\draw [->] (B1) to node [left] {$\half$} (S1);
					\draw [->] (B2) to node [right] {$\half$} (S2);
			
					\draw [->,bend left] (B1) to node [above] {$\nicefrac{1}{2}$} (B2);
					\draw [->,bend left] (B2) to node [below] {$\nicefrac{1}{2}$} (B1);
				\end{tikzpicture}
			\end{center}
			\caption[A simple network with multiple equilibria.]{%
				Banks are in blue, external shareholders are in green, and the asset is shown in red.
				In this example $\mm{D}\vv{p} = \left[ \begin{array}{cc} 1 & 0 \\ 0 & 1 \end{array} \right]$, and $\mm{C} = \left[ \begin{array}{cc} 0 & \half \\ \half & 0 \end{array} \right]$.
				If $\munderbar{v}_i = 2$, and $b_i = 1$, then two equilibria are: $\vv{v} = \left[ \begin{array}{c} 1 \\ 1 \end{array} \right]$,
				and $\vv{v} = \left[ \begin{array}{c} 0 \\ 0 \end{array} \right]$

			\label{fig:multipleequilibria}}
		\end{figure}

	\subsection{The Balanced Complete Bipartite Subgraph (BCBS) problem}

		Our hardness result is based on the hardness of finding a maximum balanced clique in a bipartite graph.
		This is known as Balanced Complete Bipartite Subgraph (BCBS) problem.
		
		\begin{definition}[BCBS]
			Given a bipartite graph $G = (V_1,V_2,E)$ with $|V_1| = |V_2| = n$, the \emph{Balanced Complete Bipartite Subgraph} (BCBS) 
			problem is to find the largest integer $K$ such that there exists sets $C_1 \subset V_1$ and $C_2 \subset V_2$ 
			with the properties that $|C_1| = |C_2| = K$, and the induced graph on $C_1 \cup C_2$ is a complete bipartite subgraph of $G$.
		\end{definition}

		The BCBS problem is known to be NP-hard \cite{GJ79,J87}.
		This provides strong evidence that there is no scalable algorithm 
		that it that can find the size of the maximum balanced clique in a bipartite graph.
		
		In fact, there is significant evidence that even \emph{approximating} the size of the largest 
		balanced clique is hard.
		Feige showed that for some $\delta > 0$ it is Random 3-SAT hard to
		approximate BCBS to within a factor of $n^\delta$ \cite[Theorem 3]{F02}.

		Feige and Kogan showed that if BCBS can be approximated to within a factor of $2^{(\log n)^\delta}$ 
		for every $\delta > 0$ then 3-SAT can be solved in time $2^{n^{3/4}+\eps}$ for every $\eps > 0$ 
		\cite[Theorem 1.3]{FK04}.

		Feige and Kogan go on to conjecture that for some $\delta > 0$, there is no polynomial-time algorithm to approximate 
		BCBS to within a factor of $n^\delta$ \cite[Conjecture 1.1]{FK04}.

		Our primary hardness result shows that there are networks where calculating the maximum number of bank failures 
		that can result from a small shock is equivalent to solving the BCBS problem.
		Thus there are financial networks where even approximately estimating the number of bank failures that can result from an 
		arbitrarily small drop in asset prices is a computationally intractable problem.

	\subsection{The complexity of calculating the maximum number of failures}

		In this section, we give our main result 
		concerning the computational complexity of estimating the maximum number of failures 
		that can occur given a small drop in values of the underlying assets.
	
		In particular, this hardness result applies finding the number of failures that occur at equilibrium 
		when the institutional cross-holdings are fixed, 
		and completely known, but there is some small uncertainty in the prices of the underlying assets.
		
		In the situation that the cross-holdings and the assets are fixed, it is straightforward to calculate the 
		market values of the institutions at equilibrium \cite[Section 3.2.3]{EGJ14}.  
		Note that the network we construct is acyclic.

		\begin{theorem}
			\label{thm:hardness}
			For every bipartite graph $G$ on $2n$ nodes, and every $\eps > 0$, there is a financial network with $\Omega(n)$ 
			institutions, and a $d > 0$ such that computing the maximum number of institutions that could fail 
			following a shock of $d \eps$ in asset prices is as hard as solving the BCBS problem in $G$.
		\end{theorem}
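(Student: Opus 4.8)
The plan is to build a financial network from the bipartite graph $G = (V_1, V_2, E)$ so that the failure cascade triggered by a small asset-price drop exactly traces out a balanced complete bipartite subgraph. I would set up the network in several layers. First, a layer of ``asset-holder'' institutions, one for each vertex in $V_1 \cup V_2$, each holding a small, equal share of an appropriately scaled pool of underlying assets, so that a shock of total magnitude $d\eps$ can be concentrated by the adversary onto an arbitrary subset $S \subseteq V_1$ of size $K$ (the adversary chooses where the drop lands). The key design principle: choose failure thresholds $\munderbar{v}_i$ so that a $V_1$-institution fails precisely when it absorbs its share of the shock, and a $V_2$-institution $w$ fails precisely when \emph{all} of its neighbors in $V_1$ (with respect to $E$) have failed. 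The latter is the crux — I would implement an AND gate by having $w$'s value be a sum of contributions routed through the $V_1$-institutions it is adjacent to (using the linear cross-holdings $C_{ij}$), with $w$'s threshold tuned so that $w$ survives if even one neighbor is still solvent but drops below threshold once every neighbor has taken its failure penalty. A third layer of ``counter'' or ``amplifier'' gadgets can be attached to the surviving/failing $V_2$-institutions so that the total failure count is a clean monotone function of $|\{w \in V_2 : N(w) \cap S \text{ all failed}\}|$.

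With this construction the argument runs as follows. Given a balanced complete bipartite subgraph with parts $C_1 \subseteq V_1$, $C_2 \subseteq V_2$, $|C_1| = |C_2| = K$, the adversary drops asset values so that exactly the institutions in $C_1$ fail; since every $w \in C_2$ has all of its neighbors among $C_1$ (completeness of the bipartite subgraph, reading edges appropriately — here I would actually work with the bipartite \emph{complement} or set up adjacency so ``complete on $C_1 \cup C_2$'' becomes ``$N(w) \subseteq C_1$''), each such $w$ fails, yielding at least $K$ further failures plus whatever the amplifier layer contributes. Conversely, I need to show no drop of magnitude $d\eps$ can do better: because the $V_1$-layer shares are equal and the shock budget is $d\eps$, at most $K$ of the $V_1$-institutions can be pushed below threshold (this fixes the scaling of $d$ relative to $\eps$ and $K$), and a $V_2$-institution fails only if its entire neighborhood lies inside that failed set — i.e. the failed $V_1$-set together with the failed $V_2$-set must form a complete bipartite configuration of the required type. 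Hence the maximum number of failures is an explicit increasing function of the BCBS optimum, and recovering the optimum from the failure count is immediate.

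Two points need care. First, \textbf{multiple equilibria}: the network I described has the ``AND-gate'' structure, so for a fixed asset drop the relevant equilibrium should in fact be unique — a $V_2$-institution's failure status is forced by its neighbors' statuses, which are forced by the (fixed) shock — but I would verify this explicitly, or, since the theorem statement and Section~\ref{sec:multipleequilibria} commit us to the best-case (fewest-failures) equilibrium, argue that even the best-case equilibrium realizes the full cascade because the failures are not merely self-consistent but \emph{forced}. Second, I want the network \textbf{acyclic}, as the theorem asserts, which is why the layered design (assets $\to$ $V_1$-layer $\to$ $V_2$-layer $\to$ amplifiers $\to$ external shareholders) with all edges pointing one direction is essential; there are no cross-holding cycles, so the linear part of the valuation (Equation~\ref{eqn:marketvaluation}) is governed by Lemma~\ref{lem:acyclicequity} and behaves benignly, and all the combinatorial difficulty is pushed into the discontinuous $\vv{b}(\vv{v})$ terms. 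The main obstacle I anticipate is the quantitative tuning: choosing the asset pool sizes, the cross-holding weights $C_{ij}$, the thresholds $\munderbar{v}_i$, and the penalties $\beta_i$ simultaneously so that (a) the undisturbed network has no failures, (b) a budget-$d\eps$ shock can fail any $K$-subset of $V_1$ but no more, and (c) the $V_2$-AND-gates fire exactly on fully-failed neighborhoods — all while keeping the number of institutions $\Omega(n)$ and the construction polynomial-time. This is bookkeeping rather than a conceptual barrier, but it is where the proof will actually be spent.
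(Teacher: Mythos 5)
Your overall architecture (layered acyclic network, budget argument forcing at most $K$ left-side failures, amplifier chains, and the observation that the cascade is forced so the best-case equilibrium is the relevant one) matches the paper's proof. But the central gadget is wrong, and the error is exactly at the step you flag as "the crux." You propose an AND gate: a $V_2$-institution $w$ fails precisely when \emph{all} of its neighbors have failed, i.e.\ when $N(w) \subseteq S$ where $S$ is the failed $V_1$-set. This breaks both directions of the reduction. For completeness: if $C_1 \cup C_2$ induces a biclique, then every $w \in C_2$ satisfies $C_1 \subseteq N(w)$, not $N(w) \subseteq C_1$, so failing $C_1$ need not fire any of your AND gates. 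For soundness: if $w$ fails under your semantics, you only learn $N(w) \subseteq S$; a vertex $w$ with a single neighbor that happens to lie in $S$ fails without being adjacent to the rest of $S$, so the failed sets do \emph{not} form a complete bipartite configuration, and the failure count does not certify a biclique. Passing to the bipartite complement, as you suggest, converts $N_G(w) \subseteq S$ into $N_{\bar G}(w) \supseteq V_1 \setminus S$, which yields a biclique between the failed right set and the \emph{unshocked} left vertices --- an unbalanced, complemented variant that is not the BCBS instance the theorem asks you to solve, and your sketch does not carry out the extra work needed to relate the two.

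The paper's gadget is a \emph{threshold}-$d$ gate, not an AND gate: each right-hand institution $b_j$ owns a $\frac{1}{N}$ stake in each left neighbor (with $N = \frac{D}{1-r}$, $D$ the maximum degree), and its threshold $\munderbar{v}_j = \frac{|\Gamma(j)| - d}{N}$ is tuned so that $b_j$ fails if and only if at least $d$ of its neighbors fail. Combined with the budget argument --- a shock of $d\eps$ can push at most $d$ left institutions below threshold, and spreading it over $t < d$ assets leaves every right institution with a felt drop strictly below $\frac{d}{N}$ --- a failed right institution must count all $d$ failed left institutions among its neighbors. This is exactly the biclique condition in $G$ itself: failed-left $\times$ failed-right is complete bipartite, giving a balanced biclique of size $\min(d,P)$. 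With $d = K$ the yes-case fires all $K$ right members of a $K \times K$ biclique, and the no-case caps right-side failures at the biclique number. Everything else in your outline (only the $V_1$-layer holds assets in the paper, the $\ell$-chains amplify the gap to $(2+\ell)K$ versus $K + (\ell+1)\frac{K}{g}$, and acyclicity is immediate from the layering) survives once you swap in the correct gate, but as written the reduction does not establish hardness.
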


		\begin{proof}
			Let $\ell > 0$ be any integer. 

			Our starting point is the following hardness result for the BCBS problem. 
			Given an $n \times n$ balanced bipartite graph $G$, it is hard to decide whether 
			the largest balanced bipartite clique size in $G$ is at least $K \times K$ or at most $K/g \times K/g$ 
			for some gap function $g$. For instance, $g = 2^{(\log n)^\delta}$ under the assumption 
			that 3-SAT $\not \in DTIME\inparen{2^{n^{3/4+\eps}}}$ for some $\eps > 0$.

			Given an $n \times n$ balanced bipartite graph $G$, 
			we will construct a financial network with $(2+\ell)n$ institutions 
			such that if $G$ has a balanced bipartite subgraph of size $K$, 
			then a drop in asset prices by $K\eps$ can cause at least $(2+\ell)K$ failures.
			On the other hand, if the largest balanced bipartite subgraph of $G$ is of 
			size $\frac{K}{g}$, a drop in asset prices of $K\eps$ can cause at most $K + \frac{K}{g}(\ell + 1)$ failures.

			This shows that estimating the maximum number of failures induced by a fixed 
			drop in asset prices is at least as hard as estimating the size of the maximum 
			balanced bipartite clique.
			Without loss of generality, we assume that every vertex in $G$ has degree at least $K$.

			Let $D$ denote the maximum degree of any vertex in $G$.
			Let $0 < \eps < 1$ be an arbitrary parameter, and let $0 < r < 1$ denote 
			the minimum amount of self-holdings of the institutions in the network we are 
			constructing.  (The reduction will hold for any choices of $0 < \eps,r < 1$.)

			For each node in the graph $G$, we will associate a financial institution.
			Let institutions $1,\ldots,n$ correspond to the left-hand nodes of $G$, 
			and institutions $n+1,\ldots,2n$ correspond to the right-hand nodes of $G$.

			We will also generate $n$ underlying assets, labelled $a_i',\ldots,a_n'$ 
			and institution $i$ will complete own asset $a_i'$ for $i=1,\ldots,n$.
			Institutions $n+1,\ldots,2n$ will own none of the underlying assets.
			All assets will initially be valued at $1$.
			Thus
			\[
				\mm{D} = \left[ \begin{array}{cccccc} 
				1 & & & & &\\
				  & \ddots & & & & \\
				  & & 1 & & & \\
				 & & & 0 & & \\
				 & & & & \ddots & \\
				 & & & &   &0 \\
				\end{array} \right]
			\]

			Define $N = \frac{D}{1-r}$ (recall $D$ is the maximum degree of $G$, and $r$ is an arbitrary parameter that will determine the \emph{integration} of the resulting 
			financial network).  We will use $\Gamma(j)$ to denote the neighbors of vertex $j$ in $G$.
			Notice that our definition of $N$ ensures that 
			\[
				1-r = \frac{D}{N} \ge \frac{|\Gamma(j)|}{N}
			\]
			for all $j = 1,\ldots,2n$.
			This means that if institution $j$ sells an equal $\frac{1}{N}$ stake in itself to all of its neighbors, 
			it will be left with at least an $r$ fraction of self ownership.  To operationalize this, we define
			\[
				c_{ij} = \left\{ \begin{array}{cl} \frac{1}{N} &\mbox{ if $i > j$ and $(i,j)$ an edge of $G$ } \\ 0 &\mbox{ otherwise } \end{array} \right.
			\]

			For $1=1,\ldots,n$ let $\munderbar{v}_i = 1-\frac{|\Gamma(i)|}{N} - \epsilon$, so if institution $i$'s asset drops in value by $\epsilon$ then institution $i$ will fail.
			Let the failure penalty $\beta_i = 1-\frac{|\Gamma(i)|}{N} - \epsilon$ for $i=1,\ldots,n$.  Thus if asset $i$ drops in price by $\epsilon$, institution $i$ fails and its value immediately drops to $0$.
			For $i=n+1,\ldots,2n$ let $\munderbar{v}_i = \frac{|\Gamma(i)|-d}{N}$.  Notice that if all assets are initially valued at $1$ then
			\[
				v_i = \left\{ \begin{array}{cl} 1 - \frac{|\Gamma(i)|}{N}  &\mbox{ if $1 \le i \le n$ } \\ \\ \frac{|\Gamma(i)|}{N} &\mbox{ if $n < i \le 2n$ } \end{array} \right.
			\]

			This financial network has the following properties:

			\begin{enumerate}
				\item
					If $j > n$ and $d$ of bank $j$'s neighbors fail, \ie $d$ of the assets $a_i$ ($i \in \Gamma(j)$) drop in value by $\epsilon$, 
					then bank $j$ will fail.
				\item
					If $j > n$ and the total drop in value of bank $j$'s neighbors is less than $\frac{d}{N}$ then bank $j$ will not fail.
			\end{enumerate}

			Now, we examine the properties of this system when the assets are allowed to drop in price by total amount $d\eps$.
			A drop of $d \eps$ can always cause $d$ institutions on the left-hand side of the network to fail, simply 
			by dropping the value of each of their assets by $\eps$.
		
			What happens if the price drop is not concentrated among exactly $d$ assets?
			Let $t$ denote the number of assets that drop in value by at least $\eps$.
			If $t < d$, then at least $t \eps$ from the ``shock budget'' of $d\eps$ 
			was used to lower the price of these $t$ assets, which leaves a budget of $(d-t)\eps$ remaining.
			Now, consider how much this drop can affect one of a right-hand institutions, $j$.
			Even if this drop is concentrated entirely among the left-hand neighbors of $j$, 
			the drop $j$ feels is at most
			\[	
				\frac{1}{N} \left( t + (d-t)\epsilon \right) < \frac{1}{N} \left( t + (d-t) \right) = \frac{d}{N}
			\]
			Thus institution $j$ cannot fail, since the failure of a right-hand institution requires a drop in value of at least $\frac{d}{N}$.
			This means that in this case exactly $t < d$ institutions fail.
			Since we are interested in the \emph{maximum} number of failures that can arise from a drop in asset value of $d \eps$, 
			we can, without loss of generality, assume that \emph{exactly} $d$ assets drop in value by $\eps$, 
			causing \emph{exactly} $d$ failures among the left-hand institutions.

			Now, suppose there is a biclique of size $K$ in $G$.
			If $d = K$, then causing $d$ failures among the left-hand members of this biclique 
			will cause $d=K$ failures among the right-hand members.

			On the other hand, suppose the largest biclique is of size $\frac{K}{g}$.
			If the failure of $d$ left-hand institutions causes the failure of $P$ 
			right-hand institutions, then each of the $P$ failed institutions on the right 
			must be connected to each of the $d$ failed institutions on the left.
			Thus there must be a biclique of size $\min(d,P) = P$.

			Thus in the ``yes'' case ($G$ has a biclique of size $K$) we can cause at least $K$ right hand failures 
			with a failure budget of $K \eps$.  In the ``no'' case (the largest biclique in $G$ is of size $\nicefrac{K}{g}$) 
			the maximum number of failures of right-hand institutions is bounded by $\frac{K}{g}$.

			Now, to amplify this discrepancy, we add a chain of $\ell$ institutions connected to each right hand institution.
			Thus for every right-hand institution, $b_i$, it will have a chain of institutions $b_i^{(1)},\ldots,b_i^{(\ell)}$ where 
			$b_i^{(j)}$ owns a $1-r$ fraction of $b_i^{(j-1)}$ and has no other holdings.  Thus if $b_i$ fails, 
			then $b_i^{(1)}$ through $b_i^{(\ell)}$ fail as well (see Figure \ref{fig:hardnetwork}).

			This new network has $(\ell+2)n$ banks, and has the following properties.
			Given a failure budget of $K \eps$,
			if the largest balanced clique in $G$ is a $K \times K$, a drop in value by $K \epsilon$ causes at most $(2+\ell)K$,
			but if the largest biclique is of size $\frac{K}{g}$, then a drop in asset values 
			of at most $K \eps$ can cause at most $K + (\ell+1)\frac{K}{g} = (g+\ell+1)\frac{K}{g}$ failures.

			Note that when the gap, $g = n^\delta$, choosing $\ell = poly(n)$, we obtain a gap of $((\ell+2)n)^{\delta'}$ for some 
			$\delta' < \delta$.

		\end{proof}

		\begin{center}
			\begin{figure}
			\tikzstyle{asset}=[circle,thick,draw=red!50!black,minimum size=1cm]
			\tikzstyle{bank}=[circle,thick,draw=blue,minimum size=1cm]
			\tikzstyle{shadow}=[circle,thick,draw=green!50!black]
			\begin{tikzpicture}
		
				\coordinate (TOP) at (2,8);
				\coordinate (BOT) at (2,-8);
		
				\pgfmathtruncatemacro{\LAST}{7} %This is the number of banks in the first rows (should be odd to account for the \cdots in the middle)						
				\pgfmathtruncatemacro{\MID}{(\LAST+1)/2}						
				\foreach \xi in {1,...,\LAST} {
					\pgfmathsetmacro{\x}{\xi/(\LAST+1)}

					\ifthenelse{\xi=\MID}{
						\node (AA\xi) at ([xshift=0]$(TOP)!\x!(BOT)$) {$\vdots$};
						\node (A\xi) at ([xshift=2cm]$(TOP)!\x!(BOT)$) {$\vdots$};

						\node (BB\xi) at ([xshift=6cm]$(TOP)!\x!(BOT)$) {$\vdots$};
						\node (B\xi) at ([xshift=8cm]$(TOP)!\x!(BOT)$) {$\vdots$};
						\node (B\xi) at ([xshift=12cm]$(TOP)!\x!(BOT)$) {$\vdots$};
					}{
						\ifthenelse{\xi=\LAST}{
							\node (AA\xi) at ([xshift=0]$(TOP)!\x!(BOT)$) [asset] {$a^\prime_n$};
							\node (A\xi) at ([xshift=2cm]$(TOP)!\x!(BOT)$) [bank] {$a_n$};
							\draw [->] (AA\xi) to node [above] {$1$} (A\xi);	

							\node (BB\xi) at ([xshift=6cm]$(TOP)!\x!(BOT)$) [bank] {$b_n$};
							\node (B\xi) at ([xshift=8cm]$(TOP)!\x!(BOT)$) [bank] {$b^{(1)}_n$};
							\node (BBBB\xi) at ([xshift=12cm]$(TOP)!\x!(BOT)$) [bank] {$b^{(\ell)}_{n}$};
							\node (BBB\xi) at ([xshift=10cm]$(TOP)!\x!(BOT)$)  {$\cdots$};
							\draw [->] (BB\xi) to node [above] {$1-r$} (B\xi);	
							\draw [->] (B\xi) to node [above] {$1-r$} (BBB\xi);	
							\draw [->] (BBB\xi) to node [above] {$1-r$} (BBBB\xi);	
						}{
							\ifthenelse{\xi>\MID}{
								\pgfmathtruncatemacro{\C}{\LAST-\xi}
								\def\A{n-\C}	
							}{
								\def\A{\xi}
							}
							\node (AA\xi) at ([xshift=0]$(TOP)!\x!(BOT)$) [asset] {$a^\prime_{\A}$};
							\node (A\xi) at ([xshift=2cm]$(TOP)!\x!(BOT)$) [bank] {$a_{\A}$};
							\draw [->] (AA\xi) to node [above] {$1$} (A\xi);	

							\node (BB\xi) at ([xshift=6cm]$(TOP)!\x!(BOT)$) [bank] {$b_{\A}$};
							\node (B\xi) at ([xshift=8cm]$(TOP)!\x!(BOT)$) [bank] {$b^{(1)}_{\A}$};
							\node (BBB\xi) at ([xshift=10cm]$(TOP)!\x!(BOT)$)  {$\cdots$};
							\node (BBBB\xi) at ([xshift=12cm]$(TOP)!\x!(BOT)$) [bank] {$b^{(\ell)}_{\A}$};
							\draw [->] (BB\xi) to node [above] {$1-r$} (B\xi);	
							\draw [->] (BBB\xi) to node [above] {$1-r$} (BBBB\xi);	
							\draw [->] (B\xi) to node [above] {$1-r$} (BBB\xi);	
						}
					}		
				}

				\foreach \x in {1,...,\LAST} {
					\foreach \a/\b in {3/4,3/2,3/6} {
						\pgfmathtruncatemacro{\C}{mod(\a*\x+\b,6)+1}
						%Map 5 to 9
						\ifthenelse{\x=\MID}{
							\pgfmathtruncatemacro{\y}{\LAST}
						}{
							\pgfmathtruncatemacro{\y}{\x}
						}
						\ifthenelse{\C=\MID}{
							\pgfmathtruncatemacro{\C}{\LAST}
						}{}
						\draw [->] (A\y.east) to (BB\C.west);
					}
				}

			\end{tikzpicture}
			\caption{There are $n$ assets, $\{a_i^\prime\}$, shown in {\color{red!50!black} red}.  Each asset $a_i^\prime$ is fully owned by institution $a_i$.  
				Institutions $\{a_i\}$, $\{b_i\}$ correspond to a hard instance of balanced bipartite clique.
				Institutions $\left\{ b_i^{(j)} \right\}$ serve to amplify the failures that occur in the first level. \label{fig:hardnetwork}}
			\end{figure}
		\end{center}	

		Applying a result of \cite{FK04}, we obtain the following Corollary.

		\begin{corollary}
			If $3$-SAT $\not \in DTIME\inparen{ 2^{n^{3/4+\eps}}}$ for some $\eps > 0$, then 
			there exists a $\delta > 0$ such that there is no polynomial time algorithm that can calculate the maximum number of failures 
			in a financial network caused by a drop in asset prices of $d \eps$ to within a 
			factor of $2^{(\log n)^{\delta'}}$ for some $\delta' > 0$.
		\end{corollary}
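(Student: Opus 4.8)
The plan is to obtain this Corollary as a direct composition of the gap‑preserving reduction of Theorem~\ref{thm:hardness} with the hardness of approximating BCBS that follows from the stated complexity assumption. First I would invoke \cite[Theorem 1.3]{FK04} in contrapositive form: if $3$-SAT $\not\in DTIME\inparen{2^{n^{3/4+\eps}}}$ for some $\eps>0$, then there is a $\delta>0$ for which no polynomial‑time algorithm approximates BCBS within the factor $g\defined 2^{(\log n)^\delta}$. Concretely, this means it is hard to distinguish $n\times n$ balanced bipartite graphs $G$ whose largest balanced biclique has size at least $K$ from those whose largest balanced biclique has size at most $K/g$; moreover, the degree‑reduction observation used inside the proof of Theorem~\ref{thm:hardness} lets us assume every vertex of $G$ has degree at least $K$, so this promise version of BCBS remains hard.

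Next I would feed such a graph $G$ into the construction of Theorem~\ref{thm:hardness}, leaving the chain length $\ell$ to be fixed in a moment, producing a financial network on $N'\defined(\ell+2)n$ institutions and a shock of magnitude $d\eps$ with $d=K$. Theorem~\ref{thm:hardness} guarantees that in the ``yes'' case the shock forces at least $(2+\ell)K$ failures, whereas in the ``no'' case it forces at most $K+(\ell+1)\tfrac{K}{g}=\tfrac{K}{g}(g+\ell+1)$ failures. Hence any polynomial‑time algorithm that estimates the maximum number of failures within a factor strictly better than $\tfrac{(2+\ell)g}{g+\ell+1}$ would separate the two cases, solving the BCBS promise problem in polynomial time and contradicting the assumption.

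It then remains to choose $\ell$ so that this induced gap, re‑expressed as a function of the true instance size $N'=(\ell+2)n$ rather than $n$, still has the quasipolynomial form $2^{(\log N')^{\delta'}}$. Taking $\ell=g=2^{(\log n)^\delta}$ makes the gap $\tfrac{(2+\ell)g}{g+\ell+1}=\Omega(g)$, while $\log N'=\log\bigl((g+2)n\bigr)=\Theta\bigl((\log n)^{\max(1,\delta)}\bigr)$. Picking any $\delta'$ with $0<\delta'<\delta/\max(1,\delta)$ — for instance $\delta'=\tfrac12\min(\delta,1)$ — gives $(\log N')^{\delta'}=o\bigl((\log n)^\delta\bigr)$, so $2^{(\log N')^{\delta'}}$ is smaller than the gap $\Omega(g)$ for all sufficiently large $n$. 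A polynomial‑time $2^{(\log N')^{\delta'}}$‑approximation for the maximum number of failures would therefore deliver the forbidden BCBS approximation, which proves the Corollary.

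The step I expect to require the most care is this last one: bookkeeping the gap through the reduction and translating it from $n$ to $N'$. The chain‑amplification trick simultaneously blows up the instance size (from $n$ to $(\ell+2)n$) and only multiplies the raw failure counts by roughly $\ell$, so $\ell$ must be chosen large enough that the gap actually grows, yet still quasipolynomial in $n$ so that $2^{(\log N')^{\delta'}}$ remains a genuinely sub‑polynomial factor in the new instance size. Everything else is a routine substitution into Theorem~\ref{thm:hardness} together with the contrapositive of \cite[Theorem 1.3]{FK04}.
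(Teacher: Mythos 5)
Your proposal is correct and follows exactly the route the paper intends (the paper's own ``proof'' is just the one-line remark ``Applying a result of \cite{FK04}, we obtain the following Corollary,'' together with the closing sentence of the proof of Theorem~\ref{thm:hardness} about choosing $\ell$): compose the gap reduction of Theorem~\ref{thm:hardness} with the contrapositive of \cite[Theorem 1.3]{FK04} and pick $\ell$ so the gap survives re-expression in the blown-up instance size $(\ell+2)n$. Your bookkeeping is in fact more explicit than the paper's; the only small point to add is that one may assume $\delta \le 1$ without loss of generality (inapproximability within the larger factor $2^{(\log n)^{\delta}}$ implies inapproximability within any smaller factor), which guarantees $\ell = 2^{(\log n)^{\delta}} \le n$ and hence that the reduction itself runs in polynomial time.
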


	\section{Conclusion}

	This work highlights two distinct sources of instability in financial networks, 
	instability arising from fluctuations in cross-holdings and instability arising 
	from fluctuations in asset prices.  More specifically, we show that 
	there are networks where small fluctuations in cross-holdings or asset prices can have striking consequences,
	and these consequences have numerous implications.

	Our first result (Corollary \ref{cor:acyclicsensitivity}, Theorems \ref{thm:cyclicsensitivity}, \ref{thm:cycliclowerbound}) shows 
	that the effect of small changes in cross-holdings is strongly tied to the integration of the network.
	In highly integrated networks small changes in cross-holdings can have potentially unbounded effects 
	on market valuations, while in networks with low integration, changes in cross-holdings have more tightly 
	bounded effects on the market values of the institutions in the network.  These results can be interpreted in 
	many ways.  From a regulatory perspective, in a highly integrated network, any regulator must know the entire cross-holdings network 
	to a very high degree of accuracy in order effectively understand the market values of the institutions.
	From an institution's perspective, small changes in investment by individual institutions can have their effects greatly 
	magnified throughout the network.  From a predictivity perspective, if one wishes to forecast market values into the future, 
	any small forecasting uncertainty in the cross-holdings can have enormous effects on the (predicted) market values of the institutions.
	From a privacy perspective, institutions cannot maintain any privacy in their investment portfolios without compromising the ability of 
	outsiders (e.g. other institutions, outside investors or regulators) to calculate the market value of the institutions.
	These problems arise only in highly integrated networks (Theorem \ref{thm:cyclicsensitivity}), and they can all be mitigated by imposing a cap 
	on integration.  If institutions are required to maintain some fixed percentage of their ownership outside of the network, 
	the sensitivity to changes in cross-holdings can be drastically reduced.
	
	Our second result shows that small changes in the prices of the underlying assets can have unpredictable 
	effects on the number of failures in the system.  Specifically, we show that there are networks where 
	it is \emph{computationally intractable}, even with perfect information about the cross-holdings, to 
	estimate the number of failures that can occur after some small drop in asset prices.
	This result too can be interpreted from different perspectives.
	This result implies that a regulator (with perfect information about the network cross-holdings) 
	who believes there may be some bounded fluctuation in asset prices cannot be expected to distinguish a network 
	where these fluctuations cause a small number of failures, from a network where fluctuations of the same magnitude 
	can cause a massive number of failures.  Institutions face the same computational challenge.  An institution may 
	wonder whether its investment portfolio will protect it from a bounded shock in asset prices, and our results show 
	that even with complete information about the investments of all other agents in the system, it may be infeasible 
	to determine whether a specific portfolio is safe.

	Previous works have imposed specific probabilistic models on fluctuations in asset prices, and for a given probabilistic model 
	the number of failures can usually be estimated.  But what if the model is incorrect (e.g. the asset prices do not fluctuate independently)?
	Our results show that there are situations where changes in the distribution of fluctuations (but not their overall magnitude) can 
	have huge and unpredictable effects on the number of failures.

	Moving forward, it is an important research question to understand what constraints on the network will allow 
	institutions and regulators to perform these stability analyses.

	\bibliographystyle{alpha}
	\bibliography{banking}
	%\printbibliography

	\newpage

	\begin{center}
		{\Large \bf
			Appendix
		}
	\end{center}

	\appendix
	
	\section{A flow model of the system}
	\label{app:flow}

	It is often instructive to create a stochastic matrix representing the system, which models 
	money flowing through the network.

	There are $n$ financial institutions, and we introduce $n$ addition entities, the ``shareholders.''
	In this model shareholder $i$ owns a $\hat{C}_{ii}$ fraction of institution $i$ and has no other financial ties to the system.
	This defines a $2n \times 2n$ matrix, $\mm{A}$, defined as 
		
	\[
		\mm{A} = \left[ \begin{array}{c|c} \mm{C} & \mm{0} \\ \hline \hat{\mm{C}} & \mm{I} \end{array} \right]
	\]

	The augmented matrix $\mm{A}$ is now column stochastic, \ie its columns sum to $1$.
	The $ij$th entry of the matrix $\mm{A}$ represents the fraction of ``agent'' $j$ owned by agent $i$.
	There are $2n$ agents because there are $n$ financial institutions (agents $1,\ldots,n$) and $n$ collections of shareholders (agents $n+1,\ldots,2n$).  
	The lower right corner of $\mm{A}$ is the identity matrix because the shareholders are completely self-owned.  
	The columns of $\mm{A}$ sum to one because for each agent, $j$, its entire value is owned by the other financial institutions or 
	the external shareholders.
	
	Initially, the banks are assumed to have some intrinsic value $\mm{D} \vv{p}$, which indicates the value of underlying assets owned by each institution.
	Without loss of generality, we will assume that all assets are completely owned by the institutions in the network (otherwise, we can simply rescale the value of each asset).
	This assumption is equivalent to saying that $\mm{D}$ is column stochastic.

	The $2n \times 2n$ matrix $\mm{A}$ allows us to view the market valuations of each institution as the steady state of a dynamical process.
	Money flows into the system from the underlying assets, and at each time step, the value residing in each financial institution is distributed 
	to its stakeholders according to their stake.  This process terminates when all the money in the system (coming from the underlying assets) has 
	been distributed to the external shareholders.  Algebraically, this process can be viewed as follows:
	given a vector $\vv{W} \in \R^{2n}$, where $W_i$ denotes the value of the underlying assets owned by $i$, for $i=1,\ldots,n$ and $W_i = 0$ for $n < i \le 2n$, 
	then 
	\begin{equation}
		\lim_{t \rightarrow \infty} \mm{A}^t \vv{W} = \lim_{t \rightarrow \infty} \mm{A}^t \left[\begin{array}{c} \vv{w} \\ \hline \vv{0} \end{array} \right]  = \left[\begin{array}{c} \vv{0} \\ \hline \vv{v} \end{array} \right] = \left[ \begin{array}{c} 0 \\ \vdots \\ 0 \\ \hline v_{1} \\ \vdots \\ v_n \end{array} \right]
		\label{eqn:alimit}
	\end{equation}
	where $v_i$ will denote the market value of institution $i$, \ie $v_i$ is the value of institution $i$ owned by its external shareholders.
	Now, basic linear algebra tells us that 
	\[
		\mm{A}^t = \left[ \begin{array}{c|c} \mm{C}^t & \mm{0} \\ \hline \hat{\mm{C}}\left(\mm{C}^{t-1} + \mm{C}^{t-2} + \cdots + \mm{I} \right) & \mm{I} \end{array} \right]
	\]

	Throughout this work, we will use the standard operator norm for a matrix (in the $L_1$ sense),
	\[
		\| \mm{A} \| = \| \mm{A} \|_1 = \sup_{\vv{x}} \frac{ \| \mm{A} \vv{x} \|_1 }{\| \vv{x} \|_1 }.
	\]

	Because the columns of $\mm{A}$ sum to one, we have $\| \mm{A} \vv{x} \|_1 =  \|\vv{x}\|_1$
	for any $\vv{x}$.  In particular, this means that 
	\begin{equation}
		\label{eqn:noloss}
		\left\| \mm{A} \left[ \begin{array}{c} \mm{D} \vv{p} \\ \vec{0} \end{array} \right] \right\|_1= \left\| \vv{p} \right\|_1
	\end{equation}
	This is the algebraic statement that the total market value of all the institutions is exactly the total value of the underlying assets in the system.
	
	To analyze equation \ref{eqn:alimit} we recall the standard fact about matrix series
	\begin{lemma}
		\label{lem:matrixsum}

		If $\mm{C}$ is a matrix with $\|\mm{C}\| < 1$, then $\mm{I} - \mm{C}$ is invertible and 
		\[
			(\mm{I} - \mm{C})^{-1} = \sum_{k=0}^\infty \mm{C}^k
		\]
	\end{lemma}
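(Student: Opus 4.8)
The plan is to establish this via the standard Neumann series argument in two moves: first show the infinite series converges to some matrix $\mm{S}$, then verify algebraically that $\mm{S}$ is a two-sided inverse of $\mm{I} - \mm{C}$. For convergence, I would work in the Banach space of $n \times n$ matrices equipped with the operator norm $\|\cdot\| = \|\cdot\|_1$ used throughout the paper, which is submultiplicative, so $\|\mm{C}^k\| \le \|\mm{C}\|^k$. Since $\|\mm{C}\| < 1$ by hypothesis, the scalar series $\sum_{k=0}^\infty \|\mm{C}\|^k$ is a convergent geometric series, so $\sum_{k=0}^\infty \mm{C}^k$ converges absolutely; by completeness of the matrix space, the partial sums $\mm{S}_N = \sum_{k=0}^N \mm{C}^k$ converge to a limit $\mm{S}$.

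Next I would identify $\mm{S}$ as the inverse. The key computation is the telescoping identity $(\mm{I} - \mm{C}) \mm{S}_N = \sum_{k=0}^N \mm{C}^k - \sum_{k=1}^{N+1} \mm{C}^k = \mm{I} - \mm{C}^{N+1}$, and symmetrically $\mm{S}_N (\mm{I} - \mm{C}) = \mm{I} - \mm{C}^{N+1}$. Since $\|\mm{C}^{N+1}\| \le \|\mm{C}\|^{N+1} \to 0$ as $N \to \infty$, and since matrix multiplication is continuous, passing to the limit yields $(\mm{I} - \mm{C}) \mm{S} = \mm{S} (\mm{I} - \mm{C}) = \mm{I}$. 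Hence $\mm{I} - \mm{C}$ is invertible with $(\mm{I} - \mm{C})^{-1} = \mm{S} = \sum_{k=0}^\infty \mm{C}^k$.

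There is no real obstacle here; this is a textbook fact. The only points requiring a modicum of care are (i) invoking submultiplicativity and completeness to justify that the matrix series actually converges (as opposed to merely being formally manipulable), and (ii) justifying the interchange of limit and multiplication, which follows from continuity of the bilinear map $(\mm{A},\mm{B}) \mapsto \mm{A}\mm{B}$ in finite dimensions. I would state these explicitly but not belabor them, since the paper only needs the conclusion in order to expand $(\mm{I} - \mm{C})^{-1}$ and $(\mm{I} - \tilde{\mm{C}})^{-1}$ as power series in the sensitivity arguments.
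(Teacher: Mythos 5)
Your proof is correct and follows essentially the same route as the paper's: the telescoping identity $\mm{S}_N(\mm{I}-\mm{C}) = \mm{I} - \mm{C}^{N+1}$ together with $\|\mm{C}^{N+1}\| \le \|\mm{C}\|^{N+1} \to 0$. The only cosmetic difference is that the paper first establishes invertibility separately via the reverse triangle inequality ($\|(\mm{I}-\mm{C})\vv{v}\| \ge \|\vv{v}\| - \|\mm{C}\vv{v}\| > 0$, so the kernel is trivial) and then solves for the partial sums, whereas you prove convergence of the series first and read off invertibility from the two-sided inverse; both orderings are fine.
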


	\begin{proof}
		First, note if $\vv{v} \ne \vv{0}$, then $\|\mm{C}\vv{v}\| < \|\vv{v}\|$, so 
		\[
			\| (\mm{I} - \mm{C})\vv{v} \| \ge \| \mm{I} \vv{v} \| - \|\mm{C}\vv{v}\| > 0
		\]
		so $\mm{I} - \mm{C}$ has a trivial kernel, and hence is invertible.

		Then, the result follows just as in the scalar case:
		Letting $\mm{S} = \sum_{k=0}^N \mm{C}^k$, we have $\mm{S}( \mm{I} - \mm{C} ) = \mm{I} - \mm{C}^{N+1}$.
		Thus
		\[
			\sum_{k=0}^N \mm{C}^k = (\mm{I} - \mm{C})^{-1} (\mm{I} - \mm{C}^{N+1})
		\]
		Letting $N \rightarrow \infty$, and noting that $\mm{C}^{N+1} \rightarrow 0$ gives the result.
	\end{proof}

	This leads to the closed formula used in \cite{EGJ14}:
	\begin{equation}
		\label{eqn:marketvalue}
		\vv{v} = \hat{\mm{C}} ( \mm{I} - \mm{C} )^{-1} \mm{D}\vv{p}
	\end{equation}
	Where the $n \times 1$ vector $\vv{p}$ represents the values of the underlying assets and the vector $\vv{v}$ represents the market value of the financial institutions (as measured by their external shareholders).
	Equations \ref{eqn:alimit} and \ref{eqn:noloss} then tell us that $\|\mm{D}\vv{p}\|_1 = \|\vv{p}\|_1 = \|\vv{v}\|_1$, which is just the simple statement that the total value of the agents is the same as the total value of 
	the underlying assets, in other words value is never created or destroyed by the network.

\end{document}